\theoremstyle{plain}
 \newtheorem{thm}{Theorem}[section]
 \newtheorem{prop}{Proposition}[section]
 \newtheorem{lem}{Lemma}[section]
\theoremstyle{definition}
 \newtheorem{exm}{Example}[section]
 \newtheorem{rem}{Remark}[section]
\newtheorem{dfn}{Definition}[section]
\numberwithin{equation}{section}
\newcommand{\N}{\mathcal{ N}}
\newcommand{\R}{\mathbb{R}}
\DeclareMathOperator{\rank}{\mathrm{rank}}
\title[Classical relativistic nonholonomic mechanics]
{Classical relativistic nonholonomic mechanics and time-dependent $G$-Chaplygin systems with affine constraints}
\subjclass[2020]{37J06, 37J60, 70F25, 70G45,  70H40}
\author[Jovanovi\'c]{
Bo\v zidar Jovanovi\'c}
\address{
Mathematical Institute SANU \\
Serbian Academy of Sciences and Arts \\
Kneza Mihaila 36, 11000 Belgrade\\
Serbia}
\email{bozaj@mi.sanu.ac.rs}
\begin{document}

\begin{abstract}
We study the relativistic formulation of a classical time-dependent nonholonomic Lagrangian mechanics from the perspective of moving frames. We also introduce time-dependent $G$-Chaplygin systems with affine constraints, which are natural objects for the invariant formulation of nonholonomic systems with symmetries.
As far as the author is aware, the Hamiltonization problem for time-dependent constraints has not yet been studied. As a first step in this direction, we consider a rolling without sliding of a balanced disc of radius $r$ over a vertical circle of variable radius $R(t)$. We modify the Chaplygin multiplier method and prove that the reduced system becomes the usual Lagrangian system with respect to the new time.
\end{abstract}

\maketitle

\tableofcontents

\section{Introduction}

Galileo Galilei's (1564--1642) \emph{principle of relativity} is one of the most important steps towards understanding nature. In modern terms it was formulated  by Poincar\'e, see e.g. \cite{P1} (we follow Arnold \cite{Ar}):

\begin{itemize}

\item \emph{All the laws of nature at all moments of time are the same in all inertial coordinate systems}.

\item \emph{A coordinate system in uniform rectilinear motion with respect to an inertial one is also inertial}.

\end{itemize}

Depending on the geometric structure of the 4-dimensional affine space-time, we obtain two different mechanics: classical and special relativity, which share the same Galilean principle of relativity (see e.g. \cite{Ar, Jo}).

We study the relativistic formulation of a classical time-dependent nonholonomic Lagrangian mechanics. We follow \cite{Jo2} where we have treated holonomic systems. It appears that for a space-time formulation of nonholonomic mechanics it is natural to pass from a class of systems with homogeneous nonholonomic constraints defined by distributions of the tangent bundle of the configuration space to a class of nonholonomic systems with nonhomogeneous constraints. That is why we consider a constrained Lagrangian system $(Q,L,\mathcal A)$, where $Q$ is an $n$--dimensional configuration space, $L\colon TQ\times\R\to\R$ is a time-dependent Lagrangian and a motion is subject to time-dependent non-homogeneous constraints: the velocity $\dot\gamma(t)$ of an admissible motion $\gamma(t)$ belongs to a time-dependent affine distribution $\mathcal A_t$ of the tangent bundle $TQ$. A motion of the system is derived from the \emph{d'Alambert principle} or the \emph{d'Alambert-Lagrange principle}: the trajectories of a mechanical system are obtained from the condition that the variational derivative of the Lagrangian vanishes along virtual displacements \cite{AKN, Bloch}.

Note that the d'Alambert principle in classical Lagrangian mechanics fulfils the following general variant of the principle of relativity, which does not include a notion of inertial frames (see e.g. \cite{Jo2}):

\begin{itemize}

\item  \emph{All the laws of nature at all moments of time are the same in all reference frames.}

\end{itemize}

With appropriate geometric structures on space-time manifolds and a notion of reference frame, both general relativity (with local meaning of time and reference frame) and classical mechanics agree with the above principle. The invariant formulation of classical nonholonomic Lagrangian mechanics on a space-time manifold is well known (see e.g. \cite{MP} and references therein), but is not so emphasised. On the other hand, the above principle, incorporated in Einstein's general equivalence principle, was one of the basic motivations for its foundation, and it is widely used in general relativity.

Here we have presented the space-time formulation of nonholonomic mechanics by using analogies to fixed and moving reference frames in rigid body dynamics. We also introduce time-dependent $G$-Chaplygin systems with affine constraints, which are natural objects for the invariant formulation of nonholonomic systems with symmetries.

\subsection{Outline and results of the paper}

In section \ref{s2} we recall the d'Alambert principle for a nonholonomic Lagrangian system $(Q,L,\mathcal A)$, which also includes the field of non-potential forces $\mathbf F$. Motivated by the notion of fixed and moving reference frames in rigid body dynamics \cite{Ar, AKN, Jo2}, we consider arbitrary time-dependent transformations between the configuration space $Q$ (the fixed reference frame) and the manifold $M$ diffeomorphic to $Q$ (the moving reference frame) and consider trajectories of a nonholonomic Lagrangian system in both reference frames (Theorem \ref{posledica2}). A notion of moving energy \cite{BMB, FS} naturally appears in the relativistic formulation of nonholonomic mechanics (section \ref{s3}). All considerations are valid without the assumption that the Lagrangian is regular and are derived without the use of Lagrange multipliers.

In section \ref{s4} we apply the construction of moving reference frames for the invariant formulation of nanholonomic Lagrangian mechanics in a space-time, $(n+1)$--dimensional manifold $\mathcal Q$, which is fibred over $\R$ with fibers diffeomorphic to $Q$. The invariant formulation of time-dependent classical Lagrangian mechanics is well studied (see e.g. \cite{MP} and references therein). Here, following \cite{Jo2}, we have tried to present it with minimal technical requirements.

In section \ref{s5} we consider nonholonomic systems $(Q,L,\mathcal A)$ on fiber spaces and in Section \ref{s6} we use them to describe the reduction of time-dependent $G$--Chaplygin systems associated to time-dependent principal bundles. Recall that the usual $G$--Chaplygin systems have a natural geometric framework as connections on principal bundles (see \cite{Koi1992, EK2019}). On the other hand, nonholonomic systems with symmetries, in particular Chaplygin systems, are incorporated into the geometric framework of Ehresmann connections on fiber spaces (see \cite{BKMM}). In this paper, following \cite{DGJ2}, we combine the approach of \cite{BKMM} with the Voronec nonholonomic equations \cite{Vor1902} and derive an invariant form of the Voronec equations for time-dependent Ehresmann connections (Theorem \ref{invVoronec}, Proposition \ref{movingVoronec}). The invariant form of the equations allows us to perform a $G$--Chaplygin reduction for the case of non-Abelian time-dependent symmetries (Theorem \ref{redukcijaSistema}).

A naturally related problem is the Hamiltonization of $G$--Chaplygin systems. For natural mechanical systems, the reduced system has more additional terms compared to the usual case of homogeneous time-independent constraints (see section \ref{s6}). As far as the author is aware, the Hamiltonization problem for time-dependent constraints has not yet been studied. As a first step in this direction, we consider a rolling without sliding of a balanced disc of radius $r$ over a vertical circle of variable radius $R(t)$. We modify the Chaplygin multiplier method and prove that the reduced system becomes the usual Lagrangian system with respect to the new time (Theorem \ref{hamiltonizacija*}).

\section{D'Alambert principle} \label{s2}

\subsection{Nonholonomic systems with affine constraints}

We consider a nonholonomic Lagrangian system $(Q,L,\mathcal A)$, where $Q$ is an $n$--dimensional configuration space, $L(q,\dot q,t)$ is a time--dependent Lagrangian, $L: TQ \times \R\to \R$, and $\mathcal A$ are nonholonomic constraints
\[
\mathcal A=\{(\mathcal A_t,t)\,\vert\, t\in\R\}\subset TQ \times \R,
\]
where $\mathcal A_t$ is a time-dependent affine distribution of rank $m$ of the tangent bundle $TQ$. A curve $\gamma\colon\R\to Q$ is \emph{admissible} (or allowed by constraints) if the velocity $\dot\gamma(t)$ belongs to $\mathcal A_t$, $t\in\R$.

The affine distribution can be written in the following form
\[
\mathcal A_t=\mathcal D_t+\chi_t,
\]
where $\mathcal D_t$ is a time-dependent distribution of rank $m$ and $\chi_t$ is a time-dependent vector field on $Q$ defined modulo $\mathcal D_t$. The vectors in $\mathcal A_t$ are called \emph{admissible velocities}, while the vectors in $\mathcal D_t$ are called \emph{virtual displacements}.

Let $r=n-m$. In local coordinates $q=(q^1,\dots,q^n)$ on $Q$, the constraints are given by equations
\begin{equation}\label{constraints}
\dot q\in \mathcal A_t\vert_q \Longleftrightarrow \sum_{i=1}^n a_{\nu i}(q,t)\dot q^i+ a_\nu(q,t)=0, \qquad  \nu=1,\dots,r,
\end{equation}
while virtual displacements satisfy
\[
\eta\in \mathcal D_t\vert_q \Longleftrightarrow \sum_{i=1}^n a_{\nu i}(q,t)\eta^i=0, \qquad  \nu=1,\dots,r,
\]
where $\rank(a_{\nu i})_{1\le \nu\le r,1\le i \le n}=r$.

We assume that the constraints are nonholonomic. This means that there are no functions $f_\nu(q^1,\dots,q^n,t)$, $\nu=1,\dots,r$, such that
the affine distribution $\mathcal A_t$ is locally defined by \eqref{constraints}, where
\[
a_{\nu}=\frac{\partial f_\alpha}{\partial t}, \qquad a_{\nu i}=\frac{\partial f_\nu}{\partial q^i}, \qquad \nu=1,\dots,r, \qquad i=1,\dots,n.
\]

That is why it is convenient to
consider the rank $(m+1)$ distribution $\underline{\mathcal A}\subset T(Q\times\R)$ of the extended configuration space $Q\times\R$ defined by
\begin{equation}\label{podvucenoA}
\underline{\mathcal A}_{(q,t)}=\big\{\lambda\eta+\mu\big(\chi_t+\frac{\partial}{\partial t}\big)\, \big\vert\,
\eta\in \mathcal D_t\vert_q, \lambda,\mu\in\R \big\}\subset T_{(q,t)}(Q\times\R),
\end{equation}
or in local coordinates
\begin{equation*}\label{constraints*}
\underline{\eta}=\sum_{i=1}^n \eta^i\frac{\partial}{\partial q^i}+\eta_t\frac{\partial}{\partial t} \in \underline{\mathcal A}_{(q,t)}
\Longleftrightarrow \sum_{i=1}^n a_{\nu i}(q,t)\eta^i+ a_\nu(q,t)\eta_t=0, \qquad  \nu=1,\dots,r.
\end{equation*}
According to the Frobenious theorem, if $\underline{\mathcal A}$
is nonintegrable, the constraints are nonholonomic.

\begin{figure}[ht]
{\centering
{\includegraphics[width=6.5cm]{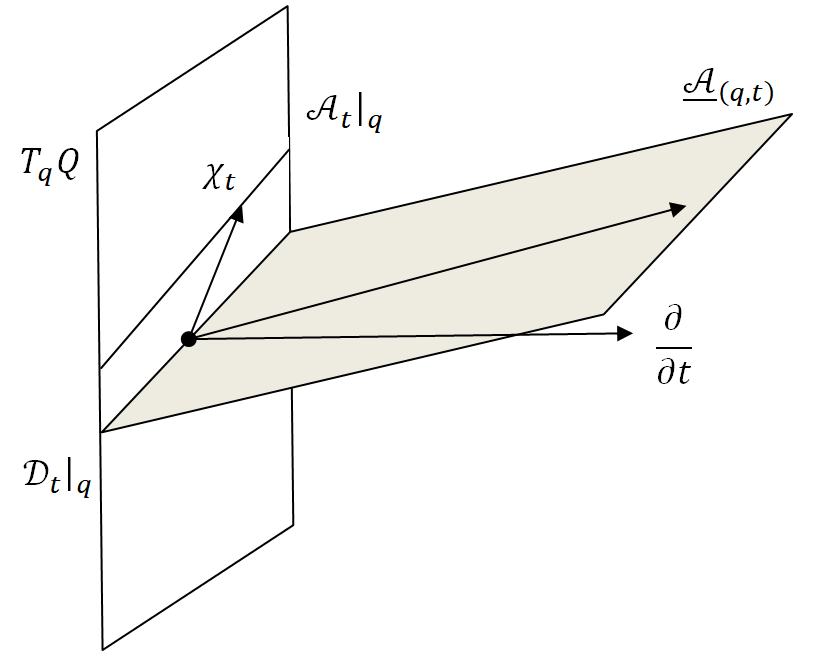}}
\caption{The tangent space $T_{(q,t)}(Q\times\R)$ of the extended configuration space and its subspaces
$\mathcal D_t\vert_q, \mathcal A_t\vert_q\subset T_q Q$ and $\underline{\mathcal A}_{(q,t)}$.}\label{distribucija}}
\end{figure}

\begin{exm}\label{primer1}
Consider $Q=\R^2\{x,y\}$ and the nonhomogeneous constraint
\begin{equation}\label{primer}
\dot x - a(x,y,t) \dot y-b(x,y,t)=0.
\end{equation}
Then
\begin{align*}
&{\mathcal D}_t\vert_{(x,y)}=
\big\{ \xi=\xi_x\frac{\partial}{\partial x}+\xi_y\frac{\partial}{\partial y}\in T_{(x,y)}\R^2 \,\big\vert\,
\xi_x =a(x,y,t) \xi_y\big\},\quad \chi_t=b(x,y,t)\frac{\partial}{\partial x}\\
&{\mathcal A}_t\vert_{(x,y)}=
\big\{ \xi=\xi_x\frac{\partial}{\partial x}+\xi_y\frac{\partial}{\partial y}\in T_{(x,y)}\R^2 \,\big\vert\,
\xi_x =a(x,y,t) \xi_y+b(x,y,t)\big\},\\
& \underline{\mathcal A}_{(x,y,t)}=\big\{\xi_x\frac{\partial}{\partial x}+\xi_y\frac{\partial}{\partial y}+\xi_t\frac{\partial}{\partial t}\in T_{(x,y,t)}\R^3 \,\big\vert\,
\xi_x =a(x,y,t) \xi_y+b(x,y,t)\xi_t\big\}
\end{align*}
Since $\underline{\mathcal A}$ is a kernel of 1-form $\alpha=dx-ady-bdt$ on the extended configuration space $\R^3\{(x,y,t)\}$, the constraint \eqref{primer} is nonholonomic if $\alpha$ is a contact:
\begin{align*}
\alpha\wedge d\alpha=&\big( dx-ady-bdt\big )\wedge \big(\frac{\partial a}{\partial t} dy \wedge dt-\frac{\partial a}{\partial x} dx \wedge dy - \frac{\partial b}{\partial y} dy \wedge dt- \frac{\partial b}{\partial x} dx \wedge dt\big)\\
=&\big(\frac{\partial a}{\partial t}-\frac{\partial b}{\partial y}+b\frac{\partial a}{\partial x}-a\frac{\partial b}{\partial x}\big)dx\wedge dy \wedge dt \ne 0,
\end{align*}
In particular, even for homogeneous constraints $b\equiv 0$, if ${\partial a}/{\partial t}\ne 0$ the constraints are nonholonomic, although the distribution $\mathcal A_t=\mathcal D_t$ is integrable for any fixed $t$.
\end{exm}

\subsection{Dynamics}

In classical mechanics, the dynamics in the case of \emph{ideal nonholonomic constraints} is defined by the \emph{d'Alembert principle} (e.g, see \cite{AKN, Bloch}): an admissible curve $\gamma$ is a motion of the constrained Lagrangian system $(Q,L,\mathcal A)$ if the variational derivative $\delta L(\eta)\vert_\gamma$ vanishes for all virtual displacements $\eta$ along $\gamma$:
\begin{equation*}\label{dalamber}
\delta L(\eta)\vert_\gamma=
\sum_{i=1}^n \big(\frac{\partial L}{\partial q^i}-\frac{d}{dt}\frac{\partial L}{\partial \dot q^i}\big)\eta^i\vert_{\gamma(t)}=0, \qquad \eta\vert_{\gamma(t)}\in \mathcal D_t\vert_{\gamma(t)}.
\end{equation*}
For integrable constraints, the principle is equivalent to another fundamental principle -- \emph{Hamiltonian principle of least action}, see e.g. \cite{AKN, Bloch, Jo2}.

If we also have a field of non-potential forces
$\mathbf F(q,\dot q,t)$,
\[
\mathbf F(q,\dot q,t)=\sum_{i=1}^n F_i(q,\dot q,t) dq^i,
\]
the equations of motion are given by
\begin{equation}\label{dalamberF}
\delta L(\eta)\vert_\gamma+\mathbf F(\eta)\vert_\gamma=
\sum_{i=1}^n \big(\frac{\partial L}{\partial q^i}-\frac{d}{dt}\frac{\partial L}{\partial \dot q^i}+F_i\big)\eta^i\vert_{\gamma(t)}=0, \qquad \eta\vert_{\gamma(t)}\in \mathcal D_t\vert_{\gamma(t)}.
\end{equation}

Consider the associated time-dependent fiber derivative
$\mathbb FL: TQ\to T^*Q$ defined by
\begin{equation}\label{legendre}
\mathbb FL(q,\xi,t)(\eta)=
\frac{d}{ds}\vert_{s=0}L(q,\xi+s\eta,t), \qquad \xi,\eta\in T_q Q.
\end{equation}

If \eqref{legendre} is a diffeomorphism between $TQ$ and $T^*Q$ for all $t\in\R$, the corresponding Lagrangian $L$ is called \emph{regular}.
Then the initial value problem $q(t_0)=q_0$, $\dot{q}(t_0)=\dot{q}_0$ of the system has the unique solution. Locally, we can write the d'Alambert principle \eqref{dalamberF} in the form of Euler-Lagrange equations with multipliers
\begin{equation}\label{euler-lagrange}
\frac{d}{dt}\frac{\partial L}{\partial \dot q^i}=\frac{\partial L}{\partial q^i}+F_i+\sum_\nu \lambda_\nu a_{\nu i}, \qquad i=1,\dots, n,
\end{equation}
where the Lagrange multipliers $\lambda_\nu=\lambda_\nu(1,\dot q,t)$ for regular Lagrangians are uniquely determined from the condition that a motion $\dot q$ satisfies the constraints \eqref{constraints}.

Let $(q^1,\dots,q^n, p_1,\dots,p_n)$ be the
canonical coordinates of the cotangent bundle $T^*Q$. Locally, the fiber derivative \eqref{legendre} is given by
\begin{equation*}\label{legendre*}
\quad p_i=\frac{\partial L}{\partial \dot q^i}, \qquad i=1,\dots, n.
\end{equation*}

For regular Lagrangians it is convenient to consider $(n+m+1)$--dimensional submanifold
\[
\mathcal M=\{(\mathcal M_t=\mathbb FL(\mathcal A_t),t)=(q^1,\dots,q^n,p_1,\dots,p_n,t)\,\vert\,p_i=\frac{\partial L}{\partial \dot q^i},\, \dot q\in\mathcal A_t, \, t\in\R\}
\]
of the extended phase space $T^*Q\times\R$.

The system \eqref{euler-lagrange} transforms into a first order system on $\mathcal M$ of the form:
\[
\dot q^i=\frac{\partial H}{\partial p_i}, \quad \dot p_i=-\frac{\partial H}{\partial q^i}+\big(F_i+\sum_\nu \lambda_\nu(q,\dot q,t)a_{\nu i}(q,t)\big)\vert_{\dot q=\mathbb FL^{-1}(p)}, \quad i=1,\dots,n,
\]
where the \emph{Hamiltonian function} $H(q,p,t)$ is the \emph{Legendre
transformation} of $L$
\begin{equation*} \label{haml}
 H(q,p,t)=E(q,\dot q,t)\vert_{\dot q=\mathbb FL^{-1}(q,p,t)}=\mathbb FL(q,\dot q,t)\cdot \dot q-L(q,\dot
q,t)\vert_{\dot q=\mathbb FL^{-1}(q,p,t)}.
\end{equation*}

For non-regular Lagrangian systems see e.g. \cite{LMM, LOS}.

\section{Nonholonomic systems in the moving frames} \label{s3}

\subsection{The invariance of the d'Alambert principle}
In analogy to rigid body dynamics, where we define the fixed and the moving frame by time-dependent isometries of Euclidean space, we consider \emph{the moving reference frame} in a general Lagrangian system $(Q,L)$ as a time-dependent diffeomorphism
\[
g_t\colon M\to Q, \qquad q=g_t(x).
\]
Here $M=Q$, but we use different symbols to underline the domain and codomain of the mapping: the variable $q$ is in the fixed reference frame, while the variable $x$ is in the moving reference frame. Furthermore, in analogy to the angular velocity in the fixed and in the moving frame, we define the time-dependent vector fields $\omega\in \mathfrak X(Q)$ and $\Omega\in\mathfrak X(M)$ by the identities (see \cite{Jo2})
\begin{equation}\label{omegaOmega}
\omega_t(q)=\frac{d}{ds}\vert_{s=0}\big(g_{t+s}\circ (g_t)^{-1}(q)), \qquad \Omega_t(x)=\frac{d}{ds}\vert_{s=0}\big((g_{t})^{-1}\circ (g_{t+s})(x)).
\end{equation}

Note that $g_t$ as a curve in the Lie group $\text{\sc Diff}(Q)$ and the vector fields $\omega_t$ and $\Omega_t$ are elements of its Lie algebra $\mathfrak X(Q)=Lie(\text{\sc Diff}(Q))$ given by the right and left translation of the velocity $\dot g_t$ (see also \cite{AK}).
They are related by the adjoint mapping: $\omega_t=\mathrm{Ad}_{g_t}(\Omega_t)$.

\begin{exm}
Consider a motion of a rigid body in $\R^3$ and let $g_t\colon \R^3\{\vec x\}\to \R^3\{\vec q\}$, $\vec q=g_t(\vec x)=\vec r(t)+B_t(\vec x)$  be a mapping from the frame attached to the body to the frame attached in space ($B_t\in SO(3)$, $g_t\in SE(3)$). If $\vec\Gamma(t)$ is a motion in the moving frame and $\vec\gamma(t)=g_t(\vec\Gamma(t))$ is the corresponding motion in fixed space, we have (e.g., see \cite{Ar})
\[
\dot{\vec\gamma}=B_t(\dot{\vec\Gamma}(t))+\dot{\vec r}(t)+\dot B_t B^{-1}_t(\vec\gamma(t)-\vec r(t))=B_t(\dot{\vec\Gamma}(t))+\dot{\vec r}(t)+\vec\omega(t)\times (\vec\gamma(t)-\vec{r}(t)),
\]
where $\vec\omega(t)$ is the angular velocity of the body in the fixed reference frame. The group of Euclidean motions $SE(3)$ is a finite-dimensional subgroup of $\text{\sc Diff}(\R^3)$. The time-dependent vector field $\omega_t\in\mathfrak X(\R^3)$ in the fixed frame associated to the curve $g_t\in \text{\sc Diff}(\R^3)$ by \eqref{omegaOmega} is given by
\[
\omega_t(\vec q)=\dot{\vec r}(t)+\vec\omega(t)\times (\vec q-\vec r(t)).
\]
\end{exm}

As in the case of the rigid body we have (see e.g. \cite{Jo2}):

\begin{prop}\label{sabiranjeBrzina}
(i) The angular velocity vector fields are related by
\begin{equation*}\label{ugaone}
dg_t(\Omega_t)\vert_x=\omega_t\vert_{q=g_t(x)}.
\end{equation*}

(ii) {\sc The classical addition of velocities.} Let $\Gamma(t)$ be a smooth curve on $M$ and $\gamma(t)=g_t(\Gamma(t))$ be the corresponding curve on $Q$. Then
\[
\dot\gamma=dg_t(\dot\Gamma))+\omega_t(g_t(\Gamma(t))).
\]
Conversely, for a given curve $\gamma(t)$ and the corresponding curve $\Gamma(t)=g_t^{-1}(\gamma(t))$, we have\[
\dot\Gamma=dg_t^{-1}(\dot\gamma))-\Omega_t(g_t^{-1}(\gamma(t))).
\]
\end{prop}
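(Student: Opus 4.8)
The plan is to verify both parts directly from the definition \eqref{omegaOmega} by the chain rule, the only subtlety being careful bookkeeping of base points, since $g_t$ is a curve in the infinite-dimensional group $\text{\sc Diff}(Q)$. For (i), consider the curve $s\mapsto (g_t)^{-1}\circ g_{t+s}(x)$ whose velocity at $s=0$ is $\Omega_t|_x$ and which passes through $x$ at $s=0$. Composing with $g_t$ collapses the telescoping product, $g_t\big((g_t)^{-1}\circ g_{t+s}(x)\big)=g_{t+s}(x)$, so $dg_t(\Omega_t|_x)=\frac{d}{ds}\big|_{s=0}g_{t+s}(x)$. On the other hand, evaluating $\omega_t$ at $q=g_t(x)$ gives $\omega_t|_{g_t(x)}=\frac{d}{ds}\big|_{s=0}\big(g_{t+s}\circ(g_t)^{-1}\big)(g_t(x))=\frac{d}{ds}\big|_{s=0}g_{t+s}(x)$, the same vector; in particular this is the infinitesimal form of $\omega_t=\mathrm{Ad}_{g_t}(\Omega_t)$.

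For the direct formula in (ii), I would introduce the two-parameter map $F(s,u)=g_s(\Gamma(u))$, so that $\gamma(t)=F(t,t)$ and hence $\dot\gamma(t)=\partial_sF(t,t)+\partial_uF(t,t)$. The second term is immediately $\partial_uF(t,t)=dg_t(\dot\Gamma(t))$. For the first, $\partial_sF(t,t)=\frac{d}{ds}\big|_{s=0}g_{t+s}(\Gamma(t))=\frac{d}{ds}\big|_{s=0}\big(g_{t+s}\circ(g_t)^{-1}\big)\big(g_t(\Gamma(t))\big)=\omega_t(g_t(\Gamma(t)))$, where I have inserted $(g_t)^{-1}(g_t(\Gamma(t)))=\Gamma(t)$ precisely so as to match the definition \eqref{omegaOmega} of $\omega_t$. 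Summing the two contributions gives $\dot\gamma=dg_t(\dot\Gamma)+\omega_t(g_t(\Gamma(t)))$.

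For the converse I would not repeat the computation but apply the vector bundle isomorphism $dg_t^{-1}\colon TQ\to TM$ to the identity just obtained, $dg_t^{-1}(\dot\gamma)=\dot\Gamma+dg_t^{-1}\big(\omega_t(g_t(\Gamma))\big)$, and then rewrite the last term using part (i) in the form $dg_t^{-1}(\omega_t|_{g_t(x)})=\Omega_t|_x$ with $x=\Gamma(t)=g_t^{-1}(\gamma(t))$; this turns it into $\Omega_t(g_t^{-1}(\gamma(t)))$ and yields the stated formula.

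There is no genuinely hard step: the whole content is the telescoping cancellations $g_t\circ(g_t)^{-1}=\mathrm{id}$ and $(g_t)^{-1}\circ g_t=\mathrm{id}$ together with the splitting of the total time derivative along $t\mapsto g_t(\Gamma(t))$ into a term coming from the drift of the moving frame (which produces $\omega_t$) and a term coming from the motion relative to the frame. The one place where a base point can go astray is the evaluation of $\partial_sF(t,t)$, and the $(g_t)^{-1}$ factor in \eqref{omegaOmega} is written exactly so that this term lands at $q=g_t(\Gamma(t))=\gamma(t)$.
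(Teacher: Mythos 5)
Your proof is correct. The paper itself supplies no argument for this proposition -- it simply notes the rigid-body analogy and cites \cite{Jo2} -- and your direct verification via the definition \eqref{omegaOmega} (the telescoping cancellation for (i), the two-parameter map $F(s,u)=g_s(\Gamma(u))$ and the chain rule along the diagonal for (ii), and pushing forward by $dg_t^{-1}$ together with (i) for the converse) is exactly the standard computation that the cited reference carries out.
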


For a given Lagrangian $L\colon TQ\times \R\to \R$ we define the associated Lagrangian $l\colon TM\times \R\to \R$ in the moving frame by
\begin{equation*}\label{noviLagranzijan}
l(x,\dot x,t):=L(q,\dot q,t)\vert_{q=g_t(x),\dot q=dg_t(\dot x)+\omega_t(g_t(x))}.
\end{equation*}

The following observation is fundamental in what follows (e.g., see \cite{Jo2}).

\begin{prop}\label{glavna}
Let $\Gamma(t)$ and $\xi$ be a smooth curve and a (time-dependent) vector field in the moving frame and $\gamma(t)=g_t(\Gamma(t))$ and
$\eta=dg_t(\xi)$ be the associated curve and the vector field in the fixed frame. Then the variational derivative of $L$ along $\gamma$ in the direction of $\xi$ coincides with the variational derivative of $l$ along $\Gamma$ in the direction of $\eta$:
\begin{equation*}\label{invDal}
\delta L(\eta)\vert_\gamma=\delta l(\xi)\vert_\Gamma.
\end{equation*}
\end{prop}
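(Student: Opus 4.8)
The plan is to reduce the statement for $L$ and $\gamma$ to the corresponding statement for $l$ and $\Gamma$ by a direct computation of the variational derivative, exploiting the classical addition of velocities from Proposition \ref{sabiranjeBrzina}. First I would fix a one-parameter family of curves $\Gamma_s(t)$ in $M$ realizing the variation, i.e. $\Gamma_0=\Gamma$ and $\partial_s|_{s=0}\Gamma_s=\xi$, and set $\gamma_s(t)=g_t(\Gamma_s(t))$; since $g_t$ is a diffeomorphism for each fixed $t$, $\gamma_s$ is a legitimate variation of $\gamma$ with $\partial_s|_{s=0}\gamma_s=dg_t(\xi)=\eta$. By Proposition \ref{sabiranjeBrzina}(ii), the velocity of $\gamma_s$ is $\dot\gamma_s=dg_t(\dot\Gamma_s)+\omega_t(g_t(\Gamma_s))$, which is precisely the substitution appearing in the definition of the associated Lagrangian $l$. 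Hence $L(\gamma_s(t),\dot\gamma_s(t),t)=l(\Gamma_s(t),\dot\Gamma_s(t),t)$ as functions of $(s,t)$.

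The key step is then to observe that the variational derivative $\delta L(\eta)|_\gamma$ is, by definition, obtained from $\frac{d}{ds}\big|_{s=0}\int L(\gamma_s(t),\dot\gamma_s(t),t)\,dt$ after an integration by parts in $t$ that discards the boundary term (the variations being assumed to vanish, or be free, at the endpoints in the usual way); equivalently it is the Euler--Lagrange one-form evaluated against $\eta$. Since the integrand $L(\gamma_s,\dot\gamma_s,t)$ equals $l(\Gamma_s,\dot\Gamma_s,t)$ identically in $s$, differentiating at $s=0$ gives the same number whether we think of it as coming from $L$ or from $l$, and the integration by parts is performed with respect to the same variable $t$. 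Therefore $\delta L(\eta)|_\gamma=\delta l(\xi)|_\Gamma$. One should note that the local Euler--Lagrange expression $\sum_i(\partial L/\partial q^i-\tfrac{d}{dt}\partial L/\partial\dot q^i)\eta^i$ is \emph{not} invariant term-by-term under the time-dependent change of variables $q=g_t(x)$ — the extra terms coming from $\omega_t$ redistribute between the $\partial/\partial q^i$ and $\tfrac{d}{dt}\,\partial/\partial\dot q^i$ pieces — which is exactly why the coordinate-free formulation via the variation of the action functional is the clean way to argue.

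The main obstacle, and the point that deserves care, is precisely the handling of the time dependence of $g_t$: because $g_t$ mixes configuration and time, one must be sure that (a) the map $s\mapsto\gamma_s$ really is an admissible variation with the claimed infinitesimal generator $\eta=dg_t(\xi)$, and (b) no spurious boundary contributions in $t$ arise from the $\omega_t$ term when integrating by parts. For (a) the chain rule gives $\partial_s|_{s=0}\gamma_s(t)=dg_t\big(\partial_s|_{s=0}\Gamma_s(t)\big)=dg_t(\xi)$ cleanly, since $g_t$ does not depend on $s$. For (b) one uses that the variation vector fields vanish (or are left free consistently) at the temporal endpoints, so the identity $L(\gamma_s,\dot\gamma_s,t)=l(\Gamma_s,\dot\Gamma_s,t)$ passes to the variational derivatives with no correction. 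With these two points checked, the proof is essentially the one-line observation that the two action functionals coincide on corresponding curves, together with an appeal to Proposition \ref{sabiranjeBrzina}(ii).
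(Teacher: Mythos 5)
Your argument is correct, and it is worth noting that the paper itself does not prove Proposition~\ref{glavna} but defers to \cite{Jo2}, so there is no in-paper proof to compare against line by line. Your route --- pull back the variation $\Gamma_s$ through $g_t$, observe via Proposition~\ref{sabiranjeBrzina}(ii) that $L(\gamma_s,\dot\gamma_s,t)=l(\Gamma_s,\dot\Gamma_s,t)$ identically in $(s,t)$, and then differentiate the action --- is the clean, coordinate-free way to establish covariance of the Euler--Lagrange operator under a time-dependent point transformation, and your remark that the local expression $\sum_i(\partial L/\partial q^i-\tfrac{d}{dt}\partial L/\partial\dot q^i)\eta^i$ is not invariant term by term is exactly the right warning against attempting a naive chain-rule computation. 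The one step you should make explicit is the passage from the integrated identity
\begin{equation*}
\int_{t_1}^{t_2}\delta L(\eta)\vert_\gamma\,dt=\int_{t_1}^{t_2}\delta l(\xi)\vert_\Gamma\,dt,
\end{equation*}
valid for all $\xi$ vanishing at the endpoints, to the \emph{pointwise} identity asserted in the proposition: since the paper defines $\delta L(\eta)\vert_\gamma$ as the Euler--Lagrange one-form contracted with $\eta$ at each time $t$, and since both sides of the integrand are linear in $\xi(t)$ (the left side through the fiberwise isomorphism $\eta=dg_t(\xi)$), the fundamental lemma of the calculus of variations localizes the integral identity to equality of the densities. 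Relatedly, your parenthetical ``or be free'' at the endpoints is harmless but should either be dropped or justified by noting that the boundary terms also match, because $\mathbb Fl(x,\dot x,t)(\xi)=\mathbb FL(q,\dot q,t)(dg_t(\xi))$ follows from the same substitution that defines $l$. With these two small points spelled out, the proof is complete.
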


Let us define the distribution $\mathbf D_t$ of the virtual displacements and the distribution of the admissible velocities $\mathbf A_t$ in the moving frame by the identities
\[
\mathbf D_t\vert_x =dg^{-1}_t(\mathcal D_t\vert_q), \qquad \mathbf A_t\vert_x=dg^{-1}_t(\mathcal A_t\vert_q)-\Omega_t\vert_x, \qquad x=g^{-1}_t(q).
\]
Then
\begin{align*}
&\mathbf A_t=\mathbf D_t+X_t, \qquad X_t\vert_x=dg^{-1}_t(\chi_t\vert_q)-\Omega_t\vert_x, \qquad x=g^{-1}_t(q), \\
\text{i.e.,}\qquad&\mathcal A_t\vert_q=dg_t(\mathbf A_t\vert_x)+\omega_t\vert_q, \qquad \chi_t\vert_q=dg_t(X_t\vert_x)+\omega_t\vert_q, \qquad q=g_t(x).
\end{align*}

Let $\mathbf A=\{(\mathbf A_t,t)\,\vert\,t\in\R\}$ and define a field of non-potential forces in the moving frame by
\[
\mathbf f(q,\dot q,t)(\xi):=\mathbf F(x,\dot x,t)(dg_t(\xi)), \quad \xi\in T_x M, \quad  q=g_t(x),\quad \dot q=dg_t(\dot x)+\omega_t(g_t(x)).
\]

According to Proposition \ref{sabiranjeBrzina}, we get:

\begin{thm}\label{posledica2}
A curve $\gamma$ is a motion of the nonholonomic Lagrangian system $(Q,L,\mathbf F,\mathcal A)$ if and only if $\Gamma(t)=g_t^{-1}(\gamma(t))$ is a motion
of the nonholonomic Lagrangian system $(M,l,\mathbf f,\mathbf A)$.
\end{thm}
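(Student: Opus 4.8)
The plan is to unwind the definitions and reduce everything to the invariance statement of Proposition~\ref{glavna}. The assertion is a biconditional comparing the d'Alambert principle \eqref{dalamberF} for the system $(Q,L,\mathbf F,\mathcal A)$ along a curve $\gamma$ with the same principle for the transformed system $(M,l,\mathbf f,\mathbf A)$ along $\Gamma=g_t^{-1}\circ\gamma$, so it suffices to show that each ingredient of the principle matches under the correspondence $q=g_t(x)$, $\dot q=dg_t(\dot x)+\omega_t(g_t(x))$. There are three ingredients to check: admissibility of the curve, the set of virtual displacements, and the vanishing of $\delta L(\eta)+\mathbf F(\eta)$ versus $\delta l(\xi)+\mathbf f(\xi)$.

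First I would check \emph{admissibility}. By Proposition~\ref{sabiranjeBrzina}(ii), $\dot\Gamma=dg_t^{-1}(\dot\gamma)-\Omega_t(g_t^{-1}(\gamma(t)))$, so $\dot\gamma\in\mathcal A_t\vert_{\gamma(t)}$ is equivalent, after applying the linear isomorphism $dg_t^{-1}$ and subtracting $\Omega_t$, to $\dot\Gamma\in dg_t^{-1}(\mathcal A_t\vert_q)-\Omega_t\vert_x=\mathbf A_t\vert_x$, which is exactly the definition of $\mathbf A_t$. Hence $\gamma$ is admissible for $\mathcal A$ iff $\Gamma$ is admissible for $\mathbf A$. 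Next, \emph{virtual displacements}: the map $\xi\mapsto\eta=dg_t(\xi)$ is, for each fixed $t$, a linear isomorphism $T_xM\to T_qQ$ carrying $\mathbf D_t\vert_x$ onto $\mathcal D_t\vert_q$ by the very definition $\mathbf D_t\vert_x=dg_t^{-1}(\mathcal D_t\vert_q)$; so it sets up a bijection between virtual displacement fields along $\Gamma$ and along $\gamma$. Then, \emph{the variational identity}: for a virtual displacement $\xi$ along $\Gamma$ with image $\eta=dg_t(\xi)$ along $\gamma$, Proposition~\ref{glavna} gives $\delta L(\eta)\vert_\gamma=\delta l(\xi)\vert_\Gamma$, and the very definition of $\mathbf f$ gives $\mathbf f(\xi)\vert_\Gamma=\mathbf F(\eta)\vert_\gamma$; adding these, $\delta L(\eta)\vert_\gamma+\mathbf F(\eta)\vert_\gamma=\delta l(\xi)\vert_\Gamma+\mathbf f(\xi)\vert_\Gamma$. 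Since the correspondence $\xi\leftrightarrow\eta$ is a bijection $\mathbf D_t\vert_x\leftrightarrow\mathcal D_t\vert_q$, the left-hand side vanishes for all $\eta\in\mathcal D_t\vert_{\gamma(t)}$ precisely when the right-hand side vanishes for all $\xi\in\mathbf D_t\vert_{\Gamma(t)}$. Combining with admissibility, $\gamma$ solves \eqref{dalamberF} for $(Q,L,\mathbf F,\mathcal A)$ iff $\Gamma$ solves it for $(M,l,\mathbf f,\mathbf A)$.

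I do not expect a genuine obstacle here: the content has effectively been front-loaded into Propositions~\ref{sabiranjeBrzina} and~\ref{glavna}, and the theorem is the bookkeeping that assembles them. The one point requiring a little care is the time-dependence of the frame: $\eta=dg_t(\xi)$ is a \emph{pointwise} (for each $t$) linear isomorphism, but $\delta L(\eta)\vert_\gamma$ involves $\tfrac{d}{dt}\tfrac{\partial L}{\partial\dot q^i}$, i.e.\ a derivative along the curve, so one must make sure that the quantity being tested is the full variational derivative and not a frame-frozen version of it. This is precisely the subtlety that Proposition~\ref{glavna} already resolves (its statement is about $\delta L$ along $\gamma$, not about an instantaneous Euler--Lagrange expression), so invoking it verbatim is legitimate. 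The only thing left is to note that the construction is symmetric: replacing $g_t$ by $g_t^{-1}$ (with $\Omega$ and $\omega$ swapped, per $\omega_t=\mathrm{Ad}_{g_t}(\Omega_t)$) turns $(M,l,\mathbf f,\mathbf A)$ back into $(Q,L,\mathbf F,\mathcal A)$, which makes the ``if and only if'' immediate rather than needing a separate converse argument.
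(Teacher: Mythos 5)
Your proposal is correct and follows essentially the same route as the paper: the paper derives the theorem directly from Proposition~\ref{sabiranjeBrzina} (admissibility and the correspondence of velocities), Proposition~\ref{glavna} (invariance of the variational derivative), and the definitions of $\mathbf A_t$, $\mathbf D_t$ and $\mathbf f$, exactly the three ingredients you assemble. Your remarks on the time-dependence being absorbed into Proposition~\ref{glavna} and on the symmetry of the construction are consistent with the paper's treatment.
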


Analogous to \eqref{podvucenoA}, we define the distribution $\underline{\mathbf A}\subset T(M\times\R)$ of the extended moving configuration space
\[
\underline{\mathbf A}_{(x,t)}=\big\{\lambda\xi+\mu\big(X_t+\frac{\partial}{\partial t}\big)\, \big\vert\,
\xi\in \mathbf D_t\vert_x, \lambda,\mu\in\R \big\}\subset T_{(x,t)}(M\times\R).
\]

Together with $g_t$ we consider the mapping
\[
\phi\colon M\times\R \longrightarrow Q\times \R, \qquad (q,t)=\phi(x,t)=(g_t(x),t).
\]

Then 
\[
\underline{\mathcal A}=d\phi(\underline{\mathbf A}).
\]

We can interpret the above relations in such a way that the distributions of the virtual displacements $\mathcal D_t$ and the distribution $\underline{\mathcal A}$ of the extended configuration space are "geometric objects".   

If the original constraints are homogeneous, they are generally inhomogeneous in the moving frame.
Conversely, the affine distribution $\mathcal A_t$ can be a distribution in the moving frame (the nonhomogeneous constraints can be homogeneous in the moving frame).
Therefore, for a relativistic formulation of nonholonomic mechanics, it is natural to pass from a class of systems with homogeneous nonholonomic constraints to a class of nonholonomic systems with nonhomogeneous constraints.

If $\mathbf D_t=\mathbf A_t$, we say that the moving frame $M$ is a \emph{distinguish reference frame}.
We have
\begin{align*}
\mathbf A_t=\mathbf D_t &\Longleftrightarrow \Omega_t=dg^{-1}_t(\chi_t)+\xi_t\\
&\Longleftrightarrow
\omega_t=\chi_t+\eta_t
\end{align*}
for some virtual displacement vector field $\xi_t$ on $M$ ($\eta_t$ on $Q$).
Locally we can always find a time-dependent transformation $g_t$ such that $\chi_t=\omega_t$, so that locally distinguish reference frame always exist.

\begin{exm}\label{primer2}
In natural mechanical problems, nonholonomic constraints are usually defined by the non-slip condition.
For example, consider a ball that rolls over a rotating table without slipping. In the moving reference frame attached to the rotating plane, the constraints are homogeneous, while in the fixed frame they are nonhomogeneous.\end{exm}

\subsection{Natural mechanical systems and moving energies}\label{MovEn}

Recall the definition of the energy of the Lagrangian system with the Lagrangian $L$:
\begin{equation*}\label{Energy}
E_L(q,\dot q,t)=\mathbb FL(q,\dot q,t)(\dot q)-L(q,\dot
q,t)=\sum_{i}\frac{\partial L}{\partial \dot q_i}\dot q_i-L.
\end{equation*}
For regular Lagrangians $E_L=\mathbb FL^* H$.
It is known that if the Lagrangian $L$ does not depend on time and the time-dependent constraints are homogeneous, the nonholomic system
preserves the energy.

\begin{rem}
One of the first examples of nonholonomic systems with the energy integral and time-dependent homogeneous nonholonomic constraints is a time-dependent variant of the Suslov nonholonomic rigid body motion, which was  defined by Bilimovi\'c \cite{Bilimovic1913b, Bilimovic1915}, see also \cite{BT2020}.
Note that Anton Bilimovi\'c (1879-1970) was a distinguish student of Peter Vasilievich Voronec (1871-1923) and one of the founders of Belgrade’s Mathematical Institute.
\end{rem}

It should be noted that all of the above considerations apply without the assumption that the Lagrangian is regular, and that they were derived without the use of Lagrange multipliers. Let us now consider a \emph{natural mechanical system} with constraints $(Q,L,\mathcal A)$. The Lagrangian $L$ has the form
\[
L(q,\dot q,t)=\frac12 K_t(\dot q,\dot q)+\Delta_t(\dot q)-V(q,t),
\]
where $K_t$ is a time-dependent Riemannian metric, $\Delta_t$ a time-dependent 1-form and $V$ a potential function.
Since
\begin{equation}\label{FB}
\mathbb FL(q,\dot q,t)(\cdot)=K_t(\dot q,\cdot)+\Delta_t(\cdot),
\end{equation}
the Lagrangian is regular and the energy is the sum of the kinetic and potential energy
\[
E_L(q,\dot q,t)=\frac12 K_t(\dot q,\dot q)+V(q,t).
\]

\begin{rem}
The natural mechanical system $(Q,L,\mathcal A)$
with the Lagrangian with linear term $\Delta_t(\dot q)$ is equivalent to the natural mechanical system with additional non-potential forces
 $(Q,L_0,\mathcal A,\mathbf F)$, where
\[
L_0(q,\dot q,t)=\frac12 K_t(\dot q,\dot q)-V(q,t),
\]
and
\[
\mathbf F(q,\dot q,t)(\eta)=d \Delta\big(\eta,\dot q+\frac{\partial}{\partial t}\big)=d \Delta_t\big(\eta,\dot q\big)-\sum_i\frac{\partial\Delta_i(q,t)}{\partial t}\eta^i, \qquad \eta\in T_q Q.
\]
Here, a time-dependent one-form
$
\Delta_t=\sum_i \Delta_i(q^1,\dots,q^n,t) dq^i
$
is also regarded as a one-form $\Delta$ on the extended configuration space $Q\times\R$ as well. Then
\[
d\Delta=\sum_{i<j}\big(\frac{\partial\Delta_j(q,t)}{\partial q^i}
-\frac{\partial\Delta_i(q,t)}{\partial q^j}\big) dq^i\wedge dq^j
+\sum_{i}\frac{\partial\Delta_i(q,t)}{\partial t} dt \wedge dq^i
\]
is the differential on $Q\times\R$, while
\[
d\Delta_t=\sum_{i<j}\big(\frac{\partial\Delta_j(q,t)}{\partial q^i}
-\frac{\partial\Delta_i(q,t)}{\partial q^j}\big) dq^i\wedge dq^j
\]
is the differential on $Q$ for a fixed $t$.
Thus, $\mathbf F$ consists of the gyroscopic and dissipative term $\mathbf F=\mathbf F_{gyr}+\mathbf F_{dis}$:
\begin{align*}
&\mathbf F_{gyr}(q,\dot q,t)=\sum_{i,j}\big(\frac{\partial\Delta_j(q,t)}{\partial q^i}
-\frac{\partial\Delta_i(q,t)}{\partial q^j}\big)\dot q^j dq^i, \\
& \mathbf F_{dis}(q,\dot q,t)=
-\sum_{i}\frac{\partial\Delta_i(q,t)}{\partial t} dq^i.
\end{align*}
Usually the Lagrangian $L$ does not depend on time and only the gyroscopic (or magnetic) term is considered (see e.g. \cite{DGJ2} and references therein).
\end{rem}

In a moving frame, the Lagrangian has by definition the form
\[
l(x,\dot x,t)=\frac12 \kappa_t (\dot x,\dot x)+\delta_t(\dot x)-v(x,t),
\]
where the kinetic energy, the linear term and the potential energy are given by
\begin{align*}
&\kappa_t(\xi,\eta)\vert_x=K_t(dg_t(\xi),dg_t(\eta))\vert_{q=g_t(x)},  \\
& \delta_t(\xi)\vert_x=\Delta_t(dg_t(\xi))+K_t(\omega_t,dg_t(\xi))\vert_{q=g_t(x)}=g_t^*\Delta_t(\xi)+\kappa_t(\Omega_t,\xi)\vert_x,\\
&v(x,t)=V(q,t)-\frac12 K_t(\omega_t,\omega_t)-\Delta_t(\omega_t)\vert_{q=g_t(x)}=g_t^*V-\frac12 \kappa_t(\Omega_t,\Omega_t)-g_t^*\Delta_t(\Omega_t)\vert_x,
\end{align*}
while the corresponding energy becomes
\[
E_l(x,\dot x,t)=\frac12 \kappa_t (\dot x,\dot x)+v(x,t).
\]

In \cite{BMB, FS, FNS} the following simple but important observation is used in the study of conservation of energy in nonholonomic systems.

\begin{prop}
The energies $E_L$ and $E_l$ are related by
\[
E_L(q,\dot q,t)\vert_{q=g_t(x),\dot q=dg_t(\dot x)+\omega_t(g_t(x))}=E_l(x,\dot x,t)+\mathbb Fl(x,\dot x,t)(\Omega_t)
\]
and vice verse,
\[
E_l(x,\dot x,t)\vert_{x=g^{-1}_t(q),\dot x=dg^{-1}_t(\dot q)-\Omega_t(g^{-1}_t(x))}=E_L(q,\dot q,t)-\mathbb FL(q,\dot q,t)(\omega_t).
\]
\end{prop}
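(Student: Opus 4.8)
The plan is to prove both identities by direct substitution, using only two facts already available: the relation $dg_t(\Omega_t)\vert_x=\omega_t\vert_{q=g_t(x)}$ from Proposition~\ref{sabiranjeBrzina}(i), and the observation that, being a natural mechanical Lagrangian, $l$ is regular with fiber derivative $\mathbb Fl(x,\dot x,t)(\cdot)=\kappa_t(\dot x,\cdot)+\delta_t(\cdot)$, the exact analogue of \eqref{FB}. Note that the statement involves only the Lagrangians $L$ and $l$, so the constraints play no role.

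First I would rewrite the velocity correspondence: since $\dot q=dg_t(\dot x)+\omega_t(g_t(x))$ and $\omega_t=dg_t(\Omega_t)$, we have $\dot q=dg_t(\dot x+\Omega_t)$. Expanding the kinetic term using the bilinearity of $\kappa_t$ and its definition $\kappa_t(\xi,\eta)\vert_x=K_t(dg_t\xi,dg_t\eta)\vert_{q=g_t(x)}$ gives
\[
\tfrac12 K_t(\dot q,\dot q)=\tfrac12\kappa_t(\dot x,\dot x)+\kappa_t(\dot x,\Omega_t)+\tfrac12\kappa_t(\Omega_t,\Omega_t),
\]
together with $V(g_t(x),t)=g_t^*V(x,t)$, so that the left-hand side of the first identity equals $\tfrac12\kappa_t(\dot x,\dot x)+\kappa_t(\dot x,\Omega_t)+\tfrac12\kappa_t(\Omega_t,\Omega_t)+g_t^*V$.

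Next I would expand the right-hand side. From the fiber-derivative formula, $\mathbb Fl(x,\dot x,t)(\Omega_t)=\kappa_t(\dot x,\Omega_t)+\delta_t(\Omega_t)$, and substituting the given expressions $\delta_t(\Omega_t)=g_t^*\Delta_t(\Omega_t)+\kappa_t(\Omega_t,\Omega_t)$ and $v=g_t^*V-\tfrac12\kappa_t(\Omega_t,\Omega_t)-g_t^*\Delta_t(\Omega_t)$ into $E_l+\mathbb Fl(\Omega_t)=\tfrac12\kappa_t(\dot x,\dot x)+v+\kappa_t(\dot x,\Omega_t)+\delta_t(\Omega_t)$, the two terms $\pm g_t^*\Delta_t(\Omega_t)$ cancel and the $\kappa_t(\Omega_t,\Omega_t)$ contributions combine to $+\tfrac12\kappa_t(\Omega_t,\Omega_t)$; the outcome is exactly the expression obtained above for the left-hand side, which proves the first identity.

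For the converse I would avoid recomputing everything: solving the first identity for $E_l$, it remains only to check that $\mathbb Fl(x,\dot x,t)(\Omega_t)=\mathbb FL(q,\dot q,t)(\omega_t)$ along the correspondence, and this follows at once from $\mathbb FL(q,\dot q,t)(\omega_t)=K_t(\dot q,\omega_t)+\Delta_t(\omega_t)$, the substitutions $\dot q=dg_t(\dot x+\Omega_t)$ and $\omega_t=dg_t(\Omega_t)$, and the definitions of $\kappa_t$ and $g_t^*\Delta_t$; alternatively one invokes the symmetry of the construction under replacing $g_t$ by $g_t^{-1}$, which interchanges $(Q,L)$ with $(M,l)$. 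There is no real obstacle here beyond careful bookkeeping of the quadratic and linear terms; the only point demanding attention is the systematic use of $dg_t(\Omega_t)=\omega_t$ to absorb the drift term $\omega_t$ into the argument of the metric.
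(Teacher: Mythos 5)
Your proof is correct and follows essentially the same route as the paper: expand $E_L=\tfrac12 K_t(\dot q,\dot q)+V$ along $\dot q=dg_t(\dot x+\Omega_t)$ and match it against $E_l+\mathbb Fl(\Omega_t)$ using the stated formulas for $\kappa_t$, $\delta_t$ and $v$. The only (harmless) variation is in the converse, where you verify $\mathbb Fl(x,\dot x,t)(\Omega_t)=\mathbb FL(q,\dot q,t)(\omega_t)$ and rearrange, whereas the paper simply interchanges the roles of the moving and fixed frames — an argument you also note as an alternative.
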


The first identity follows from
\begin{align*}
E_L(q,\dot q,t)\vert_{q=g_t(x),\dot q=dg_t(\dot x)+\omega_t(g_t(x))}&=\frac12\kappa_t(\dot x,\dot x)+\kappa_t(\Omega_t,\dot x)+\frac12 \kappa_t(\Omega_t,\Omega_t)+g_t^*V\\
&=\frac12\kappa_t(\dot x,\dot x)+v+\kappa_t(\Omega_t,\dot x)+\kappa_t(\Omega_t,\Omega_t)+g_t^*\Delta_t(\Omega_t)\\
&=\frac12\kappa_t(\dot x,\dot x)+v+\kappa_t(\Omega_t,\dot x)+\delta_t(\Omega_t)\\
&=E_l(x,\dot x,t)+\mathbb Fl(x,\dot x,t)(\Omega_t).
\end{align*}
If we interchange  the roles of the moving and fixed frame, we obtain the second identity.

In the case that the nonholonomic system in the moving frame $(M,l,\mathbf A)$
has the energy $E_l$ that does not depend on time and the constraints $\mathbf A$ are homogeneous, i.e. $M$ is a distinguish reference frame, then the energy $E_l$ is conserved (in \cite{BMB, FS,FNS} the somewhat stronger assumption that the constraints are time-independent is considered). Therefore, the function
\[
J(q,\dot q,t)=E_L(q,\dot q,t)-\mathbb FL(q,\dot q,t)(\omega_t)
\]
is the integral of the system $(Q,L,\mathcal A)$ in the fixed reference frame.

Slightly more general, if there exist a time-dependent vector field $\xi_t$ such that
\[
J(q,\dot q,t)=E_L(q,\dot q,t)-\mathbb FL(q,\dot q,t)(\xi_t)
\]
is conserved along trajectories of nonholonomic system $(Q,L,\mathcal A)$, then we say that $J$ is the \emph{moving energy} (or the \emph{Jacobi energy} integral),
see \cite{FS, BMB}.
On the cotangent bundle side, the moving energy corresponds to
\[
I(q,p,t)=H(q,p,t)-p(\xi_t)
\]

The moving energy can be considered within the framework of the Noether symmetries of time-dependent nonholonomic systems \cite{Jo, NS}.
Another classical approach to a notion of energy for time-dependent Lagrangian systems can be found in \cite{Mus1, Mus2}.

\section{Space-time formulation of nonholonomic mechanics}\label{s4}

\subsection{Space-time and reference frames}

A space-time manifold in classical Lagrangian mechanics is an $(n+1)$--dimensional fiber manifold over real numbers
\begin{equation}\label{fibracija}
\tau\colon \mathcal Q \longrightarrow \R,
\end{equation}
where the fibers are diffeomorphic to an $n$--dimensional configuration space $Q$.
The points $\mathbf q$ in $\mathcal Q$ are called \emph{events} and the fibers $\tau^{-1}(a)$, $a\in\R$, are called spaces of \emph{simultaneous events}.
We say that the event $\mathbf q_0$  occurred before the event $\mathbf q_1$ if $\tau(\mathbf q_0)<\tau(\mathbf q_1)$.
A \emph{time line} (or \emph{world line})
is a smooth curve $s(t)$, a section
of the fibration \eqref{fibracija} (see Fig. \ref{vremenska}),
\[
\tau(s(t))=t.
\]
A time line $s(t)$, $t\in[a,b]$ is between (or connect) the events $\mathbf q_0$ and $\mathbf q_1$ if $s(a)=\mathbf q_0$
and $s(b)=\mathbf q_1$.

The space of \emph{virtual displacements} is a subbundle of $T\mathcal Q$, the vertical distribution of the fibration \eqref{fibracija}, defined by
\[
\mathcal V=\cup_{\mathbf q\in\mathcal Q}\mathcal V_{\mathbf q}, \qquad \mathcal V_{\mathbf q}=\ker d\tau\vert_{\mathbf q}=
\{\underline\xi\in T_{\mathbf q} \mathcal Q, d\tau\vert_{\mathbf q}(\underline \xi)=0\}.
\]

Since for time lines we have
$
d\tau(\dot{s}(t))=1,
$
we also consider the affine subbundle of $T\mathcal Q$ (the first jet bundle \cite{MP}, see Fig. \ref{vremenska})
\[
\mathcal J=\cup_{\mathbf q\in\mathcal Q}\mathcal J_{\mathbf q}, \qquad
\mathcal J_{\mathbf q}=\{\underline\xi\in T_{\mathbf q} \mathcal Q, d\tau\vert_{\mathbf q}(\underline\xi)=1\}.
\]
It is clear that $\mathcal J$ is diffeomorphic to $\mathcal V$. The Lagrangian of the system $\mathbf L$ is a smooth function defined on the affine bundle $\mathcal J$:
\[
\mathbf L\colon\mathcal J\longrightarrow\R.
\]

The (global) \emph{reference frame} is a trivialisation (see Fig. \ref{vremenska})
\begin{align*}
&\varphi_\alpha\colon \mathcal Q  \longrightarrow Q_\alpha\times \R, \qquad Q_\alpha \cong Q,\\
& \varphi_\alpha(\mathbf q)=(q_\alpha,t_\alpha),
\end{align*}
such that
\[
\tau(\varphi^{-1}_\alpha(q_\alpha,t_\alpha))=t_\alpha+c_\alpha, \qquad c_\alpha\in \R.
\]
In other words,  in the reference frame $\varphi_\alpha$,  we set the time $t_\alpha$ to zero at the space of simultaneous events $\tau^{-1}(c_\alpha)$.

The vertical space $\mathcal V$ at $\mathbf q$ and $\mathcal J_{\mathbf q}$ in the frame $\varphi_\alpha$ can be naturally identified with $T_q Q_\alpha\times \R$ (see Fig. \ref{vremenska}):
\begin{align}
\label{i1}&\underline \eta\in \mathcal V_{\mathbf q} \longleftrightarrow \eta_\alpha\in T_{q_\alpha} Q_\alpha \times\{t_\alpha\}, \quad (\eta_\alpha,0)=d\varphi_\alpha\vert_{\mathbf q}(\underline\eta), \quad (q_\alpha,t_\alpha)=\varphi_\alpha(\mathbf q).\\
\label{i2}&\underline \xi\in \mathcal J_{\mathbf q} \longleftrightarrow \xi_\alpha\in T_{q_\alpha} Q_\alpha \times\{t_\alpha\}, \quad (\xi_\alpha,1)=d\varphi_\alpha\vert_{\mathbf q}(\underline\xi), \quad (q_\alpha,t_\alpha)=\varphi_\alpha(\mathbf q).
\end{align}

If we have two reference frames $\varphi_\alpha$ and $\varphi_\beta$, the transition function
defined by
\[
\phi_{\alpha\beta}=\varphi_\alpha\circ\varphi_\beta^{-1}\colon Q_\beta\times \R \longrightarrow Q_\alpha\times \R
\]
is of the form
\[
(q_\alpha,t_\alpha)=\phi_{\alpha\beta}(q_\beta,t_\beta)=\big(g_{\alpha\beta}(q_\beta,t_\beta),t_\beta+(c_\beta-c_\alpha)\big).
\]

Let $\gamma_\alpha(t_\alpha)$ be a curve in $Q_\alpha$. To $\gamma_\alpha$ we associate the time curve
\[
s(t)=\varphi_\alpha^{-1}((\gamma_\alpha(t_\alpha),t_\alpha))\vert_{t=t_\alpha+c_\alpha},
\]
and vice verse (see Fig. \ref{vremenska}).
Within the identification \eqref{i2}, the Lagrangian $\mathbf L$ in the reference frame $\varphi_\alpha$ is given by
\begin{align*}
& L_\alpha\colon TQ_\alpha\times \R \longrightarrow \R,\\
& L_\alpha(q_\alpha,\dot q_\alpha,t_\alpha)\vert_{q_\alpha=\gamma_\alpha(t_\alpha)}:=\mathbf L(\mathbf q,\dot{\mathbf q})\vert_{\mathbf q=s(t)=\varphi_\alpha^{-1}((\gamma(t_\alpha),t_\alpha))\vert_{t=t_\alpha+c_\alpha}}.
\end{align*}

\begin{figure}[ht]
{\centering
{\includegraphics[width=10.5cm]{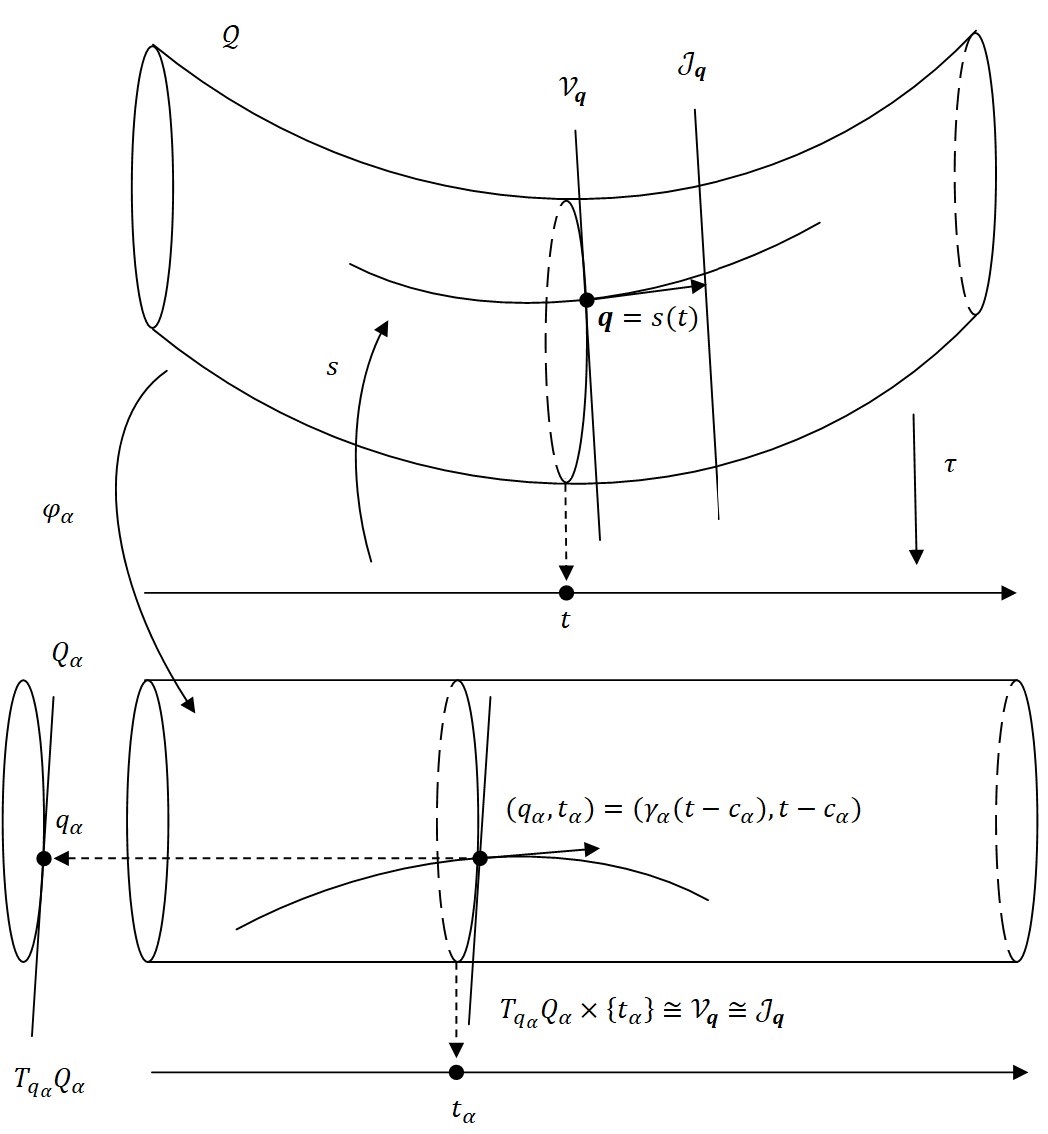}}
\caption{A section (time line) $s$ in the reference frame $\varphi_\alpha$.}\label{vremenska}}
\end{figure}

The \emph{variational derivative} of the Lagrangian $\mathbf L$ in the direction of vector field of virtual displacement $\underline\eta$ along time line $s$ is defined by
\[
\delta\mathbf L(\underline\eta)\vert_{s}:=\delta L_\alpha(\eta_\alpha)\vert_{\gamma_\alpha}.
\]

From Proposition \ref{glavna}, it follows that
the variation derivative does not depend on the reference frame $\varphi_\alpha$ (see \cite{Jo2}).
We thus have an invariant formulation of classical Lagrangian dynamics on the space-time
in the form of the d'Alambert principle:
a time curve $s$ is a motion of the mechanical system $(\mathcal Q,\mathbf L)$ if the
Lagrangian derivative of $\mathbf L$ is equal to zero,
\[
\delta\mathbf L(\underline\eta)\vert_{s}=0,
\]
for all virtual displacements $\underline\eta$ along $s$.

\subsection{Nonholonomic systems}
Let $\underline{\mathcal A}$ be a non-integrable distribution of rank $m+1$ of $T\mathcal Q$ transverse to the distribution of virtual displacements $\mathcal V$.
We define the associated \emph{affine distribution of constraints} $\mathcal A$ of rank $m$, a subspace of the first jet bundle $\mathcal J$, and the distribution of \emph{constrained virtual displacements} $\mathcal D$ of rank $m$, a subspace of the space of virtual displacements $\mathcal V$, respectively by
\[
\mathcal A=\mathcal J\cap \underline{\mathcal A}, \qquad \text{and} \qquad \mathcal D=\mathcal V\cap \underline{\mathcal A}.
\]

A time curve $s(t)$ is \emph{admissible} if $\dot s(t)\in \mathcal A_{s(t)}$ for all $t\in\R$.

\begin{dfn}
An admissible time curve $s$ is a motion of the constrained Lagrangian system $(\mathcal Q,\mathbf L,\mathcal A)$ if
the variational derivative $\delta \mathbf L(\eta)\vert_s$ vanishes for all virtual displacements $\underline{\eta}$,
sections of $\mathcal D$, along $s$:
\[
\delta\mathbf L(\underline\eta)\vert_{s}=0, \qquad \underline{\eta}\in\mathcal D.
\]
\end{dfn}

In the reference frame $\varphi_\alpha$, the nonholonomic system $(\mathcal Q,\mathbf L,\mathcal A)$ corresponds to the
nonholonomic Lagrangian system $(Q_\alpha, L_\alpha, \mathcal A^\alpha)$, where
\[
\mathcal A^\alpha=\{(\mathcal A^\alpha_{t_\alpha},t_\alpha),\,\vert\, t_\alpha\in\R\}\subset TQ_\alpha\times \R
\]
is defined from the natural identification of $\mathcal V$ and $\mathcal J$ in the reference frame $\varphi_\alpha$ (see \eqref{i1},\eqref{i2}). In other words, we have
\[
\mathcal A^\alpha_{t_\alpha}\vert_{q_\alpha} := \big\{\xi\in T_{q_\alpha} Q_\alpha\, \big\vert \,  \xi+\frac{\partial}{\partial t}\in d\varphi_\alpha\big(\mathcal A\vert_{\varphi_\alpha^{-1}(q_\alpha,t_\alpha)}\big)   \big\}.
\]

\section{The Voronec equations and time-dependent fibration}\label{s5}

\subsection{The Voronec equations}
We recall the Voronec equations for nonholonomic systems \cite{Vor1902} (in \cite{Vor1902} Voronec derived the equations
from the Lagrange-d'Alembert principle for the case of time-independent homogeneous constraints, here we adopt the notation of Bilimovic's student Demchenko \cite{Dem1924, DGJ}). Let $q=(q^1,\dots,q^{n})$ be local coordinates of the configuration space $Q$.
Consider a nonholonomic system with the Lagrangian $L=L(q,\dot q,t)$, non-potential forces (or generalized forces) $F_i=F_i(q,\dot q,t)$, which correspond to the coordinates $q^i$, $i=1,\dots,n$, and time-dependent, non-homogeneous, nonholonomic constraints of the form
\begin{equation}\label{voronec1}
\dot q^{m+\nu}=\sum_{i=1}^m a_{\nu i} (q,t) \dot q^i + a_\nu (q,t),  \qquad \nu=1,\dots,r=n-m.
\end{equation}

Let $L_c$ be the Lagrangian $L$ after imposing the constraints \eqref{voronec1}. Let $K_{\nu}$ be the partial derivatives of $L$
with respect to $\dot q^{m+\nu}$, $\nu=1,2,\dots,r$, restricted to the constrained subspace. We assume that the constraints \eqref{voronec1} are imposed \emph{after} the differentiation and obtain
\begin{align*} \label{voronec2}
& L_c(q^1,\dots,q^{n},\dot q^1,\dots,\dot q^m,t)=L(q,\dot q,t)\vert_{\dot q^{m+\nu}=\sum_{i=1}^m a_{\nu i} (q,t) \dot q^i + a_\nu (q,t)},  \\
& K_\nu(q^1,\dots,q^{n},\dot q^1,\dots,\dot q^m,t)=\frac{\partial L\,\,\,}{\partial \dot q^{m+\nu}}(q,\dot q,t)\vert_{\dot q^{m+\nu}=\sum_{i=1}^m a_{\nu i} (q,t) \dot q^i + a_\nu (q,t)}.
\end{align*}

The Voronec equations of motion of the given noholonomic system can be represented in the following closed form:
\begin{equation}\label{voronec3}
 \frac{d}{dt}\frac{\partial L_c}{\partial \dot q^i}=\frac{\partial L_c}{\partial q^i}+F_i
 +\sum_{\nu=1}^r a_{\nu i}\big( \frac{\partial L_c}{\partial q^{m+\nu}}+F_{m+\nu}\big)+ \sum_{\nu=1}^r K_\nu\big(\sum_{j=1}^m A_{ij}^{(\nu)} \dot q^j+A_{i}^{(\nu)}\big),
\end{equation}
$i=1,\dots,m$.
The components $A_{ij}^{(\nu)}$ and $A_i^{(\nu)}$ are functions of the time $t$ and the coordinates $q^1,\dots,q^{n}$ given by
\begin{align}
\label{voronec4}& A_{ij}^{(\nu)}=  \big(\frac{\partial a_{\nu i}}{\partial q^j}+\sum_{\mu=1}^r a_{\mu j}\frac{\partial a_{\nu i}}{\partial q^{m+\mu}}\big)
                   -\big(\frac{\partial a_{\nu j}}{\partial q^i}+\sum_{\mu=1}^r a_{\mu i}\frac{\partial a_{\nu j}}{\partial q^{m+\mu}}\big), \\
\label{voronec5} &  A_{i}^{(\nu)}= \big(\frac{\partial a_{\nu i}}{\partial t}+\sum_{\mu=1}^r a_{\mu }\frac{\partial a_{\nu i}}{\partial q^{m+\mu}}\big)
                   -\big(\frac{\partial a_{\nu}}{\partial q^i}+\sum_{\mu=1}^r a_{\mu i}\frac{\partial a_{\nu}}{\partial q^{m+\mu}}\big), \quad i,j=1,\dots,m.
\end{align}

\begin{rem}\label{voronecPrinciple}
The equations can be rewritten more compactly using a formal expression known as the \emph{Voronec principle} (see \cite{Vor1902, Dem1924, DGJ}).
As Demchenko noted, the Voronec principle is similar to Hamilton's variational principle of least action, although the system is not variational (see pages 16--19 of \cite{Dem1924}). In the original notation: A trajectory $q(t)=(q^1(t),\dots,q^{n}(t))$, $t\in [t_1,t_2]$, is a motion of the nonholonomic system with Lagrangian $L(q,\dot q,t)$,
generalized forces $F_i=F_i(q,\dot q,t)$, and constraints \eqref{voronec1}, if
\begin{equation}\label{vor6}
\int_{t_1}^{t_2}\Big[\sum_{i=1}^n \big(\frac{\partial L}{\partial q^i}-\frac{d}{dt}\frac{\partial L}{\partial \dot q^i}+F_i\big)\delta q^i+\sum_{\nu=1}^k K_\nu\big(\frac{d}{dt}\delta q^{m+\nu}-\delta\dot q^{m+\nu}\big)\Big]dt=0,
\end{equation}
for all virtual displacements $\delta q^1,\dots,\delta q^{m}$ that are equal to zero for $t=t_1$ and $t=t_2$, but otherwise arbitrary. The remaining variations $\delta q^{m+1},\dots,\delta q^{n}$ are determined from the homogeneous constraints
\begin{equation}\label{vor7}
\delta q^{m+\nu}=\sum_{i=1}^m a_{\nu i} \delta q^i, \qquad \nu=1,2,\dots,r.
\end{equation}
Here $\frac{d}{dt}\delta q^{n+\nu}-\delta\dot q^{n+\nu}$ are calculated according to the relations
\eqref{voronec1}, \eqref{vor7}, using the rule
\[
\frac{d}{dt}\delta q^{i}-\delta\dot q^{i}=0, \qquad i=1,2,\dots,m.
\]
\end{rem}

\subsection{The Ehresmann connections for time-dependent nonholomic systems}

Let us consider a time-dependent non-holonimic Lagrangian system $(Q,L,\mathcal A)$.
Following Bloch, Krishnaprasad, Marsden and Murray \cite{BKMM},
we assume that the configuration space $Q$ has the structure of a fiber bundle $\pi\colon Q\to S$ over a base manifold $S$ and that the distribution of virtual displacements $\mathcal D_t$ is transversal to the fibers of $\pi$:
\begin{equation}\label{transverzalnost1}
T_q Q=\mathcal D_t\vert_q \oplus \mathcal W_t\vert_q, \qquad \mathcal W_t\vert_q=\ker d\pi(q).
\end{equation}
In other words: $\mathcal D_t$ is a time-dependent \emph{horizontal space}, while
$\mathcal W_t$ \emph{vertical space} at $q$ the Ehresmann connection of the fibration (for the current fibration, $\mathcal W_t$ is not time-dependent, but this is relaxed in the subsection \ref{pokretniVoronec}).

Next, we consider the distribution $\underline{\mathcal A}$ of the extended configuration space $Q\times\R$ as the horizontal space of the Ehresmann connection with respect to the fibration
\[
\underline{\pi}\colon Q\times \R\to S\times \R, \qquad \underline{\pi}(q,t)=(\pi(q),t).
\]
Namely, the transversality \eqref{transverzalnost1} is equivalent to the transversality of $\underline{\mathcal A}$
\begin{equation}\label{transverzalnost2}
T_{(q,t)}(Q\times\R)=\underline{\mathcal A}_{(q,t)} \oplus \mathcal W_t\vert_q.
\end{equation}

The distribution $\underline{\mathcal A}$ can be regarded as the kernel of a vertical-valued one-form $\underline{\Theta}$ on $Q\times\R$ that satisfies

\begin{enumerate}[label=(\roman*)]

\item $\underline{\Theta}_{(q,t)}\colon T_{(q,t)} (Q\times \R) \to\mathcal W_q$ is a linear mapping, $(q,t)\in Q\times\R$;

\item $\underline{\Theta}$ is a projection: $\underline{\Theta}_{(q,t)}(X_{(q,t)})=X_{(q,t)}$, for all $X_{(q,t)}\in\mathcal W_q$.

\end{enumerate}

By $\underline{\xi}^v=\underline{\Theta}(\underline{\xi})$ and $\underline{\xi}^h=\underline{\xi}-\underline{\xi}^v$ we denote the vertical and horizontal components of a vector field on the extended configuration space $\underline{\xi}\in\mathfrak{X}(Q\times \R)$.

The \emph{curvature} $\underline{B}$ of the connection \eqref{transverzalnost2} is a vertical-valued two-form defined by
\[
\underline{B}(\underline{\xi},\underline{\eta})=-\underline{\Theta}([\underline{\xi}^h,\underline{\eta}^h]).
\]

Let, as above, $\dim Q=n=m+r$ and $\dim S=m$. There exist local ``adapted'' coordinates $q=(q^1,\dots,q^{n})$ on $Q$ such that
$(q^1,\dots,q^m)$ are the local coordinates on $S$,
the projections
$\pi\colon Q\to S$ and $\underline{\pi}\colon Q\times \R\to S\times \R$ have the form
\begin{align*}
& \pi\colon (q^1,\dots,q^m,q^{m+1},\dots,q^{n}) \longmapsto (q^1,\dots,q^m),\\
& \underline{\pi}\colon (q^1,\dots,q^m,q^{m+1},\dots,q^{n},t) \longmapsto (q^1,\dots,q^m,t)
\end{align*}
and the constraints defining $\mathcal A_t$ are given by \eqref{voronec1}.
Then we have locally
\begin{align*}
& \underline{\Theta}=\sum_{\nu=1}^r \omega^\nu \frac{\partial}{\partial q^{m+\nu}}, \quad \omega^\nu=dq^{n+\nu}-\sum_{i=1}^m a_{\nu i} dq^i-a_\nu dt, \\
& X^h=\big(\sum_{l=1}^{n}X^l\frac{\partial}{\partial q^{l}}+X_t\frac{\partial}{\partial t}\big)^h=
\sum_{i=1}^{m}X^i\frac{\partial}{\partial q^{i}} + X_t\frac{\partial}{\partial t}
+\sum_{\nu=1}^r\big(\sum_{i=1}^{m}a_{\nu i}X^i+a_\nu X_t\big)\frac{\partial}{\partial q^{m+\nu}},\\
& X^v=\big(\sum_{l=1}^{n}X^l\frac{\partial}{\partial q^{l}}\big)^v=
\sum_{\nu=1}^r\big(X^{m+\nu}-\sum_{i=1}^{m}a_{\nu i}X^i-a_\nu X_t\big)\frac{\partial}{\partial q^{m+\nu}},\\
& \underline{B}(\frac{\partial}{\partial q^{i}},\frac{\partial}{\partial q^{j}})=\sum_{\nu=1}^r B^\nu_{ij}\frac{\partial}{\partial q^{m+\nu}},\\
& \underline{B}(\frac{\partial}{\partial q^{i}},\frac{\partial}{\partial t})=\sum_{\nu=1}^r B^\nu_{i,t}\frac{\partial}{\partial q^{m+\nu}}, \quad
\underline{B}(\frac{\partial}{\partial t},\frac{\partial}{\partial q^i})=\sum_{\nu=1}^r B^\nu_{t,i}\frac{\partial}{\partial q^{m+\nu}}.
\end{align*}
Here $B_{ij}^\nu(q,t)=A^{(\nu)}_{ij}(q,t)$, $B_{i,t}^\nu(q,t)=A^{(\nu)}_{i}(q,t)$, $B_{t,i}^\nu(q,t)=-A^{(\nu)}_{i}(q,t)$, where $A^{(\nu)}_{ij}(q,t)$, $A^\nu_i(q,t)$ are derived from the Voronec equations \eqref{voronec4} \eqref{voronec5}, $i,j=1,\dots,m$, while the other components of the curvature are zero. To the author's knowledge, the above curvature interpretation of the Voronec coefficients is \eqref{voronec4} \eqref{voronec5} for time-dependent non-homogeneous constraints was first given by Bak\v sa \cite{Baksa2012}.

\begin{exm}\label{primer3}
In Example \ref{primer1} we have $m=1$, $n=2$, $(q^1,q^2)=(y,x)$ and
\[
B_{1,t}=\frac{\partial a}{\partial t}-\frac{\partial b}{\partial y}+b\frac{\partial a}{\partial x}-a\frac{\partial b}{\partial x}, \qquad B_{t,1}=-B_{1,t}.
\]
The distribution $\underline{\mathcal A}$ is therefore a contact if and only if the curvature $\underline{B}$ is different from zero.
\end{exm}

\begin{rem}
Similarly, $\mathcal D_t$ can be seen as the kernel of a
time-dependent vector-valued one form $\Theta_t$ on $Q$ with properties (i) and (ii) for the fibration $\pi$.
Locally, $\Theta_t$ is given by the same equation as $\underline{\Theta}$, where $\omega^\nu$ is replaced by
$\omega^\nu_t=dq^{n+\nu}-\sum_{i=1}^m a_{\nu i} dq^i$ (see e.g. \cite{BKMM} for the time-independent case).
Let
$B_t$ be the corresponding  time-dependent curvature
\[
{B_t}({\xi},{\eta})=-{\Theta_t}([{\xi}^h,{\eta}^h]), \qquad \xi,\eta\in \mathfrak{X}(Q).
\]

Since $\mathcal D_t$ is tangent to the foliation
$\{(q,t)\in Q\times\R\,\vert\,t=const\}$ and $\mathcal D_t\vert_q\subset \underline{\mathcal A}_{(q,t)}$, from the definition of the curvature we have
\[
B_t\vert_{\mathcal D_t\vert_q\times\mathcal D_t\vert_q}=\underline{B}\vert_{\mathcal D_t\vert_q\times\mathcal D_t\vert_q},
\quad B_t(\mathcal W_t\vert_q,T_q Q)\equiv 0, \quad \underline{B}(\mathcal W_t\vert_q,T_{(q,t)} Q\times\R)\equiv 0.
\]

The components of the curvature $B_t$ of the time-dependent connection $D_t$ are the same as the corresponding components of the curvature $\underline{B}$
\[
B_t(\frac{\partial}{\partial q^{i}},\frac{\partial}{\partial q^{j}})=\sum_{\nu=1}^r B^\nu_{ij}\frac{\partial}{\partial q^{m+\nu}}, \qquad i,j=1,\dots,n.
\]
\end{rem}

\begin{rem}
Note that, even if the constraint are homogeneous $\mathcal A_t=\mathcal D_t$,
the components of the curvature $B_{t,i}$ are different from zero in the time-dependent case.
\end{rem}

Let $\theta_t$ be a vertical-valued vector field given by
\[
\theta_t=\big(\frac{\partial}{\partial t}\big)^v=\underline{\Theta}\big(\frac{\partial}{\partial t}\big).
\]

\begin{lem}
The affine subspace $\mathcal A_t\vert_q\subset T_q Q$ is the image of the tangent space $T_qQ$ under the composition of affine and linear mappings
\[
\mathrm{pr}_t\colon \eta\in T_q Q \longmapsto
\big(\eta+\frac{\partial}{\partial t}\big)^h-\frac{\partial}{\partial t}=\eta^h+\theta_t \in \mathcal A_t\vert_q.
\]
\end{lem}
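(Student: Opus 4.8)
The plan is to check the two inclusions $\mathrm{pr}_t(T_qQ)\subseteq\mathcal A_t\vert_q$ and $\mathcal A_t\vert_q\subseteq\mathrm{pr}_t(T_qQ)$, using only the horizontal--vertical decomposition of $T_{(q,t)}(Q\times\R)$ attached to \eqref{transverzalnost2} together with the description \eqref{podvucenoA} of $\underline{\mathcal A}$; neither regularity of the Lagrangian nor any dynamics enters. First I would record the key elementary fact: because the connection is taken with respect to $\underline\pi$, whose vertical space $\mathcal W_t\vert_q=\ker d\pi(q)$ sits inside $T_qQ$, every vertical vector has vanishing $\partial/\partial t$--component. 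Hence for $\eta\in T_qQ$ the horizontal part $(\eta+\partial/\partial t)^h=(\eta+\partial/\partial t)-(\eta+\partial/\partial t)^v$ has the same $\partial/\partial t$--component as $\eta+\partial/\partial t$, namely $1$; in particular $\mathrm{pr}_t(\eta)=(\eta+\partial/\partial t)^h-\partial/\partial t$ really does land in $T_qQ$, and the displayed identity $\mathrm{pr}_t(\eta)=\eta^h+\theta_t$ is just linearity of the horizontal projection applied to $\eta+\partial/\partial t$.

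For the first inclusion, note that $(\eta+\partial/\partial t)^h$ is by construction horizontal, i.e.\ it belongs to $\underline{\mathcal A}_{(q,t)}$; by \eqref{podvucenoA} it must therefore be of the form $\lambda\zeta+\mu(\chi_t+\partial/\partial t)$ with $\zeta\in\mathcal D_t\vert_q$, and since its $\partial/\partial t$--component is $1$ we are forced to take $\mu=1$. Writing $\widetilde\zeta:=\lambda\zeta\in\mathcal D_t\vert_q$, we get $(\eta+\partial/\partial t)^h=\widetilde\zeta+\chi_t+\partial/\partial t$, whence $\mathrm{pr}_t(\eta)=\widetilde\zeta+\chi_t\in\mathcal D_t\vert_q+\chi_t=\mathcal A_t\vert_q$. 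Specialising to $\eta=0$ and comparing with the identity of the first paragraph also records the congruence $\theta_t\equiv\chi_t\pmod{\mathcal D_t\vert_q}$; this is the real content of the lemma, namely that prolonging $\eta$ by $\partial/\partial t$, projecting horizontally, and stripping off $\partial/\partial t$ implements precisely the affine translation $\zeta\mapsto\zeta+\chi_t$ that carries $\mathcal D_t\vert_q$ onto $\mathcal A_t\vert_q$.

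For surjectivity I would use idempotence of the horizontal projection: since $(\eta^h)^h=\eta^h$ one has $\mathrm{pr}_t(\eta)=\mathrm{pr}_t(\eta^h)$, so $\mathrm{pr}_t$ factors through the map $\eta\mapsto\eta^h$. That map sends $T_qQ$ onto $\mathcal D_t\vert_q$: its values are horizontal and have zero $\partial/\partial t$--component, hence lie in $\mathcal D_t\vert_q$ by \eqref{podvucenoA}, while conversely each $\zeta\in\mathcal D_t\vert_q$ is its own horizontal part. On $\mathcal D_t\vert_q$ the map $\mathrm{pr}_t$ reduces to $\zeta\mapsto\zeta+\theta_t$, which by the congruence above is an affine bijection of $\mathcal D_t\vert_q$ onto $\mathcal A_t\vert_q$. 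Therefore $\mathrm{pr}_t(T_qQ)=\mathrm{pr}_t(\mathcal D_t\vert_q)=\mathcal A_t\vert_q$, as claimed. As a consistency check one may simply rerun the argument in the adapted coordinates: with $\eta^h$ and $\theta_t$ given by the explicit formulas written just before the lemma, $\mathrm{pr}_t(\eta)$ is the vector with $q^i$--components $\eta^i$ for $i\le m$ and $q^{m+\nu}$--components $\sum_{i=1}^m a_{\nu i}\eta^i+a_\nu$, which is exactly a general point of $\mathcal A_t\vert_q$ by \eqref{voronec1}.

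The only potential obstacle is bookkeeping with the extra $\partial/\partial t$ direction: one must see clearly that the $\partial/\partial t$--component is unaffected by the horizontal projection, and that $\theta_t$ coincides with $\chi_t$ modulo virtual displacements. Both facts are immediate once one notes, respectively, that $\mathcal W_t$ lies in $TQ$ and that $\chi_t+\partial/\partial t$ is itself a horizontal vector.
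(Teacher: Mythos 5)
Your argument for the substantive claim --- that $\eta\mapsto(\eta+\partial/\partial t)^h-\partial/\partial t$ maps $T_qQ$ onto $\mathcal A_t\vert_q$ --- is correct and complete. The paper offers no proof at all, treating the lemma as immediate from the local formulas for $X^h$, $X^v$ and the constraint equations \eqref{voronec1}; your invariant two-inclusion argument via \eqref{podvucenoA} (forcing $\mu=1$ by comparing $dt$-components, then factoring $\mathrm{pr}_t$ through $\eta\mapsto\eta^h$ and restricting to $\mathcal D_t\vert_q$ for surjectivity) is a legitimate and somewhat more careful route to the same place, and your preliminary observation that $\mathcal W_t\vert_q\subset T_qQ$ keeps the $dt$-bookkeeping honest.

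One step as you state it is false, however, and it is worth flagging because it exposes a sign inconsistency in the paper that you have inherited rather than caught. You assert that the displayed identity $\mathrm{pr}_t(\eta)=\eta^h+\theta_t$ ``is just linearity of the horizontal projection.'' Linearity actually gives $(\eta+\partial/\partial t)^h-\partial/\partial t=\eta^h+(\partial/\partial t)^h-\partial/\partial t=\eta^h-(\partial/\partial t)^v$, which is $\eta^h-\theta_t$ under the paper's definition $\theta_t=(\partial/\partial t)^v=\underline{\Theta}(\partial/\partial t)$. In the adapted coordinates that definition yields $\theta_t=-\sum_\nu a_\nu\,\partial/\partial q^{m+\nu}$, so $\eta^h+\theta_t$ has $q^{m+\nu}$-components $\sum_i a_{\nu i}\eta^i-a_\nu$ and does \emph{not} lie in $\mathcal A_t\vert_q$; your closing coordinate check, if carried out on $\eta^h+\theta_t$ rather than on $(\eta+\partial/\partial t)^h-\partial/\partial t$, would have revealed this. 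The lemma and the subsequent assertion $\mathcal A_t=\mathcal D_t+\theta_t$ are consistent only with the convention $\theta_t:=(\partial/\partial t)^h-\partial/\partial t=-(\partial/\partial t)^v=\mathrm{pr}_t(0)$, which is evidently what is intended. Everything else in your proof uses only the expression $(\eta+\partial/\partial t)^h-\partial/\partial t$, so the two inclusions, the congruence $\theta_t\equiv\chi_t\pmod{\mathcal D_t\vert_q}$, and the surjectivity argument all survive unchanged once the sign convention is fixed.
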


Note that $\mathrm{pr}_t$ is the identity mapping  restricted to ${\mathcal A_t\vert_q}$. In other words, $\mathrm{pr}_t$ is an affine projection.
In particular, we have
\[
\mathcal A_t=\mathcal D_t+\theta_t.
\]

Let $L_c\colon TQ\times \R\to\R$ be the \emph{constrained Lagrangian} defined by
\[
L_c(q,\dot q,t):=L(q,\mathrm{pr}_t(\dot q),t)=L(q,\dot q^h+\theta_t,t),
\]

The Voronec principle \eqref{vor6} can be expressed in invariant form as follows.

\begin{thm}\label{invVoronec}
An admissible curve $\gamma$ is a motion of the constrained Lagrangian system $(Q,L,\mathbf F,\mathcal A)$ if and only if
\begin{equation}\label{voronec6}
\delta L_c(\eta)\vert_\gamma+\mathbf F(\eta)\vert_\gamma=\mathbb FL(q,\mathrm{pr}_t(\dot q),t)(\underline{B}(\dot q+\frac{\partial}{\partial t},\eta))\vert_\gamma
\end{equation}
for all virtual displacements
$
\eta=\sum_{s=1}^{n} \eta^s \frac{\partial}{\partial q_s}\in \mathcal D_t\vert_{\gamma(t)}
$
along $\gamma$.
\end{thm}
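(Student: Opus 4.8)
The plan is to reduce \eqref{voronec6} to the Voronec equations \eqref{voronec3} — which are themselves equivalent to the d'Alambert principle \eqref{dalamberF} — and then to match the two sides of \eqref{voronec6} term by term in the adapted local coordinates $q=(q^1,\dots,q^n)$ of the previous subsection. Two preliminary observations make the right‑hand side of \eqref{voronec6} meaningful along $\gamma$. First, since $\gamma$ is admissible, $\dot q\in\mathcal A_t\vert_q$, and by \eqref{podvucenoA} this gives $\dot q+\frac{\partial}{\partial t}\in\underline{\mathcal A}_{(q,t)}$; likewise every virtual displacement $\eta\in\mathcal D_t\vert_q$ lies in $\underline{\mathcal A}_{(q,t)}$. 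Hence both vectors are horizontal for the Ehresmann connection \eqref{transverzalnost2}, so $\underline B(\dot q+\frac{\partial}{\partial t},\eta)$ is defined and takes values in $\mathcal W_q$, and $\mathrm{pr}_t(\dot q)=\dot q$ on $\gamma$. Second, because $\mathbb FL(q,\mathrm{pr}_t(\dot q),t)(\frac{\partial}{\partial q^{m+\nu}})=K_\nu$ by the very definition of $K_\nu$, the pairing of the covector $\mathbb FL(q,\mathrm{pr}_t(\dot q),t)$ with any vertical vector $\sum_\nu c^\nu\frac{\partial}{\partial q^{m+\nu}}\in\mathcal W_q$ equals $\sum_{\nu=1}^r K_\nu c^\nu$.

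For the left‑hand side I would use that $L_c$ is independent of $\dot q^{m+\nu}$ and that a virtual displacement $\eta\in\mathcal D_t$ satisfies $\eta^{m+\nu}=\sum_{i=1}^m a_{\nu i}\eta^i$. Splitting the sum defining $\delta L_c(\eta)+\mathbf F(\eta)$ into the base indices $i\le m$ and the fibre indices $m+\nu$ then yields
\[
\delta L_c(\eta)\vert_\gamma+\mathbf F(\eta)\vert_\gamma=\sum_{i=1}^m\Big[\frac{\partial L_c}{\partial q^i}-\frac{d}{dt}\frac{\partial L_c}{\partial\dot q^i}+F_i+\sum_{\nu=1}^r a_{\nu i}\big(\frac{\partial L_c}{\partial q^{m+\nu}}+F_{m+\nu}\big)\Big]\eta^i.
\]
By the Voronec equations \eqref{voronec3}, the bracket equals $-\sum_{\nu=1}^r K_\nu\big(\sum_{j=1}^m A_{ij}^{(\nu)}\dot q^j+A_i^{(\nu)}\big)$ for every $i=1,\dots,m$ precisely when $\gamma$ is a motion of $(Q,L,\mathbf F,\mathcal A)$; and, by the transversality \eqref{transverzalnost1}, asking this for all $i\le m$ is the same as asking $\delta L_c(\eta)\vert_\gamma+\mathbf F(\eta)\vert_\gamma=-\sum_{i=1}^m\sum_{\nu=1}^r K_\nu\big(\sum_{j=1}^m A_{ij}^{(\nu)}\dot q^j+A_i^{(\nu)}\big)\eta^i$ for all $\eta\in\mathcal D_t\vert_{\gamma(t)}$. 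It thus remains to identify the right‑hand side of \eqref{voronec6} with this last expression.

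To do so I would expand $\underline B(\dot q+\frac{\partial}{\partial t},\eta)$ in the adapted frame. Since $\underline B$ annihilates the vertical directions $\frac{\partial}{\partial q^{m+\nu}}$ and is antisymmetric, the vertical parts of both $\dot q$ and $\eta$, as well as the $\big(\frac{\partial}{\partial t},\frac{\partial}{\partial t}\big)$ contribution, drop out, leaving
\[
\underline B\Big(\dot q+\frac{\partial}{\partial t},\eta\Big)=\sum_{\nu=1}^r\Big(\sum_{i,j=1}^m B_{ij}^\nu\dot q^i\eta^j+\sum_{j=1}^m B_{t,j}^\nu\eta^j\Big)\frac{\partial}{\partial q^{m+\nu}}.
\]
Substituting $B_{ij}^\nu=A_{ij}^{(\nu)}$ and $B_{t,j}^\nu=-A_j^{(\nu)}$, and using the antisymmetry $A_{ij}^{(\nu)}=-A_{ji}^{(\nu)}$ from \eqref{voronec4} to rename $i\leftrightarrow j$ in the quadratic term, this becomes $-\sum_{\nu=1}^r\big(\sum_{i=1}^m(\sum_{j=1}^m A_{ij}^{(\nu)}\dot q^j+A_i^{(\nu)})\eta^i\big)\frac{\partial}{\partial q^{m+\nu}}$. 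Pairing this vertical vector with $\mathbb FL(q,\mathrm{pr}_t(\dot q),t)$ according to the first paragraph reproduces exactly $-\sum_{i=1}^m\sum_{\nu=1}^r K_\nu\big(\sum_{j=1}^m A_{ij}^{(\nu)}\dot q^j+A_i^{(\nu)}\big)\eta^i$, so \eqref{voronec6} holds for all $\eta\in\mathcal D_t\vert_{\gamma(t)}$ if and only if $\gamma$ is a motion.

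The computation carries no genuine obstacle, only bookkeeping: one must remember that the base coordinates $q^i$, $i\le m$, enter $L_c$ both directly and through the constraint functions $a_{\nu i}$ and $a_\nu$, so that $\frac{\partial L_c}{\partial q^i}$ and $\frac{d}{dt}\frac{\partial L_c}{\partial\dot q^i}$ produce additional terms, and one must check that, beyond those already absorbed into $\frac{\partial L_c}{\partial q^{m+\nu}}$, they assemble precisely into the coefficients \eqref{voronec4}--\eqref{voronec5}. This is exactly the classical Voronec--Demchenko computation \cite{Vor1902, Dem1924}, so invoking \eqref{voronec3} avoids redoing it. The one point worth spelling out for the reader is the chart‑independence of the right‑hand side of \eqref{voronec6}, which follows from the tensorial character of the curvature $\underline B$ together with the horizontality of $\dot q+\frac{\partial}{\partial t}$ and of $\eta$ noted above, and from the fact that $\mathbb FL$ is paired only with a vertical vector.
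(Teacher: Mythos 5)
Your proposal is correct and follows essentially the same route as the paper: the paper's (implicit) proof consists precisely of the identification $B_{ij}^\nu=A_{ij}^{(\nu)}$, $B_{t,i}^\nu=-A_i^{(\nu)}$ made just before the theorem together with the coordinate expansion of $\mathbb FL(q,\mathrm{pr}_t(\dot q),t)(\underline B(\dot q+\frac{\partial}{\partial t},\eta))$ displayed just after it, so that \eqref{voronec6} is read off as the invariant restatement of the Voronec equations \eqref{voronec3}. Your write-up merely makes explicit the bookkeeping (the split of $\delta L_c(\eta)+\mathbf F(\eta)$ over base and fibre indices using $\eta^{m+\nu}=\sum_i a_{\nu i}\eta^i$, and the antisymmetry argument in the curvature term) that the paper leaves to the reader.
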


Here $\delta L_c(\eta)$ is the variational derivative of $L_c$ along the virtual displacements $\eta$ and:
\begin{align*}
\mathbb FL(q,\mathrm{pr}_t(\dot q),t)&\big(\underline{B}\big(\dot q+\frac{\partial}{\partial t},\eta\big)\big)\vert_\gamma
=\mathbb FL(q,\mathrm{pr}_t(\dot q),t)\big(B_t(\dot q,\eta)+\underline{B}\big(\frac{\partial}{\partial t},\eta  \big)\big)\vert_\gamma\\
=&\sum_{\nu=1}^m \frac{\partial L}{\partial \dot q_{m+\nu}}\vert_{\dot q^{m+\nu}=\sum_{i=1}^m a_{\nu i} (q,t) \dot q^i + a_\nu (q,t)}\big(B^\nu_t(\dot q,\eta)+\sum_{s=1}^n B_{t,s}^\nu\eta^s \big)\vert_\gamma.
\end{align*}

Note that the restrictions of the constrained Lagrangian $L_c$ and the Lagrangian $L$ on the affine distribution coincide:
\[
L_c\vert_{\mathcal A_t}=L\vert_{\mathcal A_t}.
\]

\subsection{Natural mechanical systems}

Let us consider the Lagrangian
\[
L(q,\dot q,t)=\frac12 K_t(\dot q,\dot q)+\Delta_t(\dot q)-V(q,t)
\]
of a natural mechanical system.
Then
the constrained Lagrangian reads
\begin{align*}
L_c(q,\dot q,t)=\frac12 K_t(\dot q^h,\dot q^h)+\big(K_t(\theta_t,\dot q^h)+\Delta_t(\dot q^h)\big)+\big(\frac12 K_t(\theta_t,\theta_t)+\Delta_t(\theta_t)-V(q,t)\big)
\end{align*}

Since the fiber derivative is given by \eqref{FB}, the right-hand side of the system \eqref{voronec6} is
\begin{align*}\label{RHS}
\mathbb FL(q,&\mathrm{pr}_t(\dot q),t)(\underline{B}(\dot q+\frac{\partial}{\partial t},\eta))=K_t(\dot q^h, B_t(\dot q,\eta))+K_t(\theta_t, B_t(\dot q,\eta))\\
\nonumber &+\Delta_t(B_t(\dot q,\eta))
+K_t(\dot q^h,\underline{B}\big(\frac{\partial}{\partial t},\eta\big))+K_t(\theta_t,\underline{B}\big(\frac{\partial}{\partial t},\eta\big))+\Delta_t(\underline{B}\big(\frac{\partial}{\partial t},\eta\big)).
\end{align*}

\subsection{The Ehresmann connection in the moving frame}\label{pokretniVoronec}

Next, we consider the moving frame $g_t\colon M\to Q$.
As a result, the moving configuration space $M$ has the structure of a time-dependent fiber bundle $\Pi_t\colon M\to S$  defined by
\[
\Pi_t(x):=\pi(g_t(x)),
\]
The distribution of the virtual displacements $\mathbf D_t$ is transverse to the fibers of $\Pi_t$:
\begin{equation*}\label{transverzalnost1m}
T_x M=\mathbf D_t\vert_x \oplus \mathbf W_t\vert_x, \qquad \mathbf W_t\vert_x=\ker d\Pi_t(x)=dg_t^{-1}(\mathcal W_t\vert_q),
\end{equation*}
and $\mathbf D_t$ and $\mathbf W_t$ are \emph{horizontal space} and \emph{vertical space} at $x$ of the Ehresmann connection of the time-dependent fibration $\Pi_t$.

As above, we consider the distribution $\underline{\mathbf A}$ of the extended configuration space $M\times\R$ as the horizontal space of the Ehresmann connection
\begin{equation}\label{transverzalnost2m}
T_{(x,t)}(M\times\R)=\underline{\mathbf A}_{(x,t)} \oplus \mathbf W_t\vert_x.
\end{equation}
related to the fibration
\[
\underline{\Pi}\colon M\times \R\to N\times \R, \qquad \underline{\Pi}(x,t)=(\Pi_t(x),t).
\]

Again, we define the curvature $\underline{\mathbf B}$ of the connection \eqref{transverzalnost2m}, the projection
\[
\mathbf{pr}_t\colon \xi\in T_x M \longmapsto
\big(\xi+\frac{\partial}{\partial t}\big)^h-\frac{\partial}{\partial t}\in \mathbf A_t\vert_x,
\]
and the constrained Lagrangian $l_c$:
\[
l_c(x,\dot x,t):=l(x,\mathbf{pr}_t (\dot x),t).
\]

Note that, equivalently, the constrained Lagrangian $l_c$ can be defined from the constrained Lagrangian $L_c$ in the fixed frame:
\begin{equation*}\label{noviLagranzijanC}
l_c(x,\dot x,t):=L_c(q,\dot q,t)\vert_{q=g_t(x),\dot q=dg_t(\dot x)+\omega_t(g_t(x))}.
\end{equation*}

This means that we have

\begin{lem}
The following diagram is commutative
\begin{equation*}
\xymatrix@R28pt@C28pt
{
T_x M \ar@{->}[d]^{\mathbf{pr}_t}  \ar@{->}[r] &  T_q Q  \ar@{->}[d]^{\mathrm{pr}_t} \\
\mathbf A_t\vert_x   \ar@{->}[r] &  \mathcal A_t\vert_q
}
\end{equation*}
where the horizontal lines denote the mapping $\xi\mapsto dg_t(\xi)+\omega_t\vert_q$, $q=dg_t(x)$.
\end{lem}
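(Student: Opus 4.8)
The plan is to reduce the commutativity of the square to the single identity
\[
\mathrm{pr}_t\big(dg_t(\xi)+\omega_t\vert_q\big)=dg_t\big(\mathbf{pr}_t(\xi)\big)+\omega_t\vert_q,\qquad \xi\in T_xM,\quad q=g_t(x),
\]
which is all the square asserts once one recalls that both horizontal arrows are $\xi\mapsto dg_t(\xi)+\omega_t\vert_q$ and that $\mathbf{pr}_t(\xi)\in\mathbf A_t\vert_x$. I would prove this identity by transporting the Ehresmann splitting on $M\times\R$ to the one on $Q\times\R$ through the map $\phi(x,t)=(g_t(x),t)$. First one computes $d\phi_{(x,t)}$: differentiating in the fibre directions gives $d\phi\vert_{T_xM}=dg_t$, while differentiating in the $\partial/\partial t$ direction and using the definition \eqref{omegaOmega} of $\omega_t$ (note $\omega_t(g_t(x))=\tfrac{d}{ds}\big\vert_{s=0}g_{t+s}(x)$) gives $d\phi(\partial/\partial t)=\omega_t\vert_q+\partial/\partial t$; hence
\[
d\phi_{(x,t)}\big(\xi+a\,\partial/\partial t\big)=dg_t(\xi)+a\,\omega_t\vert_q+a\,\partial/\partial t .
\]

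Next I would observe that $d\phi_{(x,t)}$ is a linear isomorphism carrying the decomposition $T_{(x,t)}(M\times\R)=\underline{\mathbf A}_{(x,t)}\oplus\mathbf W_t\vert_x$ onto $T_{(q,t)}(Q\times\R)=\underline{\mathcal A}_{(q,t)}\oplus\mathcal W_t\vert_q$: the horizontal summands are matched by the relation $\underline{\mathcal A}=d\phi(\underline{\mathbf A})$ established above, and the vertical ones because $\mathbf W_t\vert_x\subset T_xM$ carries no $\partial/\partial t$-component, so $d\phi$ acts there as $dg_t$, while $\mathbf W_t\vert_x=dg_t^{-1}(\mathcal W_t\vert_q)$ by definition. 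An isomorphism of direct-sum decompositions intertwines the associated projections, so $d\phi\big(\underline{\eta}^{h}\big)=\big(d\phi(\underline{\eta})\big)^{h}$ for every $\underline{\eta}\in T_{(x,t)}(M\times\R)$ (and likewise for the vertical projection).

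Finally I would apply this with $\underline{\eta}=\xi+\partial/\partial t$. Because $\mathbf W_t\vert_x$ and $\mathcal W_t\vert_q$ lie in the spatial directions, the horizontal projections fix the $\partial/\partial t$-component, so $(\xi+\partial/\partial t)^{h}$ has $\partial/\partial t$-component $1$ and $\mathbf{pr}_t(\xi)=(\xi+\partial/\partial t)^{h}-\partial/\partial t$ lies in $T_xM$ (in fact in $\mathbf A_t\vert_x$). Applying $d\phi$ and the computation above,
\[
dg_t\big(\mathbf{pr}_t(\xi)\big)=d\phi\big((\xi+\partial/\partial t)^{h}\big)-d\phi(\partial/\partial t)=\big(dg_t(\xi)+\omega_t\vert_q+\partial/\partial t\big)^{h}-\omega_t\vert_q-\partial/\partial t .
\]
Adding $\omega_t\vert_q$ to both sides turns the right-hand side into $\big((dg_t(\xi)+\omega_t\vert_q)+\partial/\partial t\big)^{h}-\partial/\partial t=\mathrm{pr}_t\big(dg_t(\xi)+\omega_t\vert_q\big)$, which is exactly the identity above, so the square commutes. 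The one delicate point is this $\partial/\partial t$-bookkeeping — it is what keeps $\mathbf{pr}_t$ and $\mathrm{pr}_t$ valued in the spatial tangent spaces and makes the subtraction of $\partial/\partial t$ meaningful — and beyond it the argument is a direct unwinding of definitions, so I anticipate no genuine obstacle. As a byproduct the lemma yields the geometric fact behind the preceding remark: substituting the commuting square into $L$ gives $l(x,\mathbf{pr}_t(\dot x),t)=L_c\big(g_t(x),dg_t(\dot x)+\omega_t\vert_q,t\big)$, i.e.\ the two descriptions of $l_c$ coincide.
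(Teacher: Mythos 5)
Your proof is correct and follows the route the paper implicitly intends: the paper states the lemma without proof, as a direct consequence of the relation $\underline{\mathcal A}=d\phi(\underline{\mathbf A})$ and the definitions of the two Ehresmann splittings, which is exactly what you unwind (the computation of $d\phi$, the matching of the vertical summands via $\mathbf W_t\vert_x=dg_t^{-1}(\mathcal W_t\vert_q)$, and the $\partial/\partial t$-bookkeeping). Your closing remark also correctly identifies the role the lemma plays in the text, namely justifying that $l_c(x,\dot x,t)=L_c(q,\dot q,t)\vert_{q=g_t(x),\,\dot q=dg_t(\dot x)+\omega_t(g_t(x))}$.
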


As a result we have.

\begin{prop}\label{movingVoronec}
An admissible curve $\Gamma(t)$ is a motion of the constrained Lagrangian system in the moving frame $(M,l,\mathbf f,\mathbf A)$ if
\begin{equation*}\label{voronec7}
\delta l_c(\xi)\vert_\Gamma+\mathbf f(\xi)\vert_\Gamma=\mathbb Fl(x,\dot x,t)(\underline{\mathbf B}(\dot x+\frac{\partial}{\partial t},\xi))\vert_\Gamma
\end{equation*}
for all virtual displacements
$
\xi=\sum_{s=1}^{n} \xi^s \frac{\partial}{\partial x_s}\in \mathbf D_t\vert_{\Gamma(t)}
$
along $\Gamma$.
\end{prop}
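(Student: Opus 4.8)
The plan is to reduce the statement to the invariant Voronec equation in the fixed frame (Theorem \ref{invVoronec}) by means of the frame-invariance results of Section \ref{s3}, and then to match the two equations term by term. Put $\gamma(t)=g_t(\Gamma(t))$. By Theorem \ref{posledica2}, $\Gamma$ is a motion of $(M,l,\mathbf f,\mathbf A)$ if and only if $\gamma$ is a motion of $(Q,L,\mathbf F,\mathcal A)$, and by Theorem \ref{invVoronec} the latter holds exactly when
\[
\delta L_c(\eta)\vert_\gamma+\mathbf F(\eta)\vert_\gamma=\mathbb FL(q,\mathrm{pr}_t(\dot q),t)\big(\underline{B}(\dot q+\tfrac{\partial}{\partial t},\eta)\big)\vert_\gamma
\]
for all $\eta\in\mathcal D_t\vert_{\gamma(t)}$. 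Since $\eta=dg_t(\xi)$ gives a linear isomorphism $\mathbf D_t\vert_x\to\mathcal D_t\vert_q$, it suffices to show that under this correspondence the two sides of the displayed identity equal the corresponding sides of the moving-frame equation; this in fact proves the equivalence, which is stronger than the ``if'' in the statement.

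First, for the left-hand side, recall that $l_c$ is precisely the moving-frame Lagrangian associated to $L_c$, namely $l_c(x,\dot x,t)=L_c(g_t(x),dg_t(\dot x)+\omega_t(g_t(x)),t)$. Hence Proposition \ref{glavna}, applied to the pair $(L_c,l_c)$, gives $\delta L_c(\eta)\vert_\gamma=\delta l_c(\xi)\vert_\Gamma$, while the definition of the force $\mathbf f$ in the moving frame gives $\mathbf F(\eta)\vert_\gamma=\mathbf F(dg_t(\xi))\vert_\gamma=\mathbf f(\xi)\vert_\Gamma$. So the left-hand sides agree.

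For the right-hand side, the key observation is that $\phi\colon M\times\R\to Q\times\R$, $\phi(x,t)=(g_t(x),t)$, is a diffeomorphism carrying the Ehresmann connection \eqref{transverzalnost2m} onto the Ehresmann connection \eqref{transverzalnost2}: indeed $d\phi(\underline{\mathbf A})=\underline{\mathcal A}$ (Section \ref{s3}) and $d\phi(\mathbf W_t\vert_x)=dg_t(\mathbf W_t\vert_x)=\mathcal W_t\vert_q$ since $\mathbf W_t\vert_x=dg_t^{-1}(\mathcal W_t\vert_q)$, so $d\phi$ preserves both summands. A bundle isomorphism with this property intertwines the vertical-projection maps of the two connections and hence their curvatures, $\underline{B}\big(d\phi(\underline u),d\phi(\underline v)\big)=d\phi\big(\underline{\mathbf B}(\underline u,\underline v)\big)$. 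Applying this with $\underline u=\dot x+\tfrac{\partial}{\partial t}$ and $\underline v=\xi$, and using $d\phi(\tfrac{\partial}{\partial t})=\omega_t+\tfrac{\partial}{\partial t}$ (which is the defining identity \eqref{omegaOmega} for $\omega_t$) together with the velocity addition $\dot q=dg_t(\dot x)+\omega_t$ of Proposition \ref{sabiranjeBrzina}(ii), we get $d\phi(\underline u)=\dot q+\tfrac{\partial}{\partial t}$, $d\phi(\underline v)=\eta$, and, $\underline{\mathbf B}(\underline u,\underline v)$ being vertical,
\[
\underline{B}\big(\dot q+\tfrac{\partial}{\partial t},\eta\big)=dg_t\big(\underline{\mathbf B}(\dot x+\tfrac{\partial}{\partial t},\xi)\big).
\]
Next, the Lemma asserting commutativity of the square relating $\mathbf{pr}_t$ and $\mathrm{pr}_t$ yields $\mathrm{pr}_t(\dot q)=dg_t(\mathbf{pr}_t(\dot x))+\omega_t$, so by the chain rule and the definition \eqref{legendre} of the fiber derivative,
\[
\mathbb FL(q,\mathrm{pr}_t(\dot q),t)(dg_t(w))=\mathbb Fl(x,\mathbf{pr}_t(\dot x),t)(w)\qquad\text{for all }w\in T_xM.
\]
Putting $w=\underline{\mathbf B}(\dot x+\tfrac{\partial}{\partial t},\xi)$ and using that $\Gamma$ is admissible (so $\mathbf{pr}_t(\dot x)=\dot x$, whence also $\mathrm{pr}_t(\dot q)=\dot q$) turns the right-hand side of the fixed-frame equation into $\mathbb Fl(x,\dot x,t)\big(\underline{\mathbf B}(\dot x+\tfrac{\partial}{\partial t},\xi)\big)\vert_\Gamma$, as required.

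The step I expect to require the most care is the curvature transformation: one has to verify that, although the $g_t$ vary with $t$, the induced map $\phi$ is an honest fiber-bundle isomorphism over the identity of the base, so that the naturality of the Ehresmann curvature under bundle maps applies; once this is granted, the remaining manipulations are routine applications of Propositions \ref{glavna} and \ref{sabiranjeBrzina} and the two Lemmas preceding the statement.
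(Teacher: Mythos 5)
Your proposal is correct and follows essentially the same route the paper intends: the paper gives no written proof beyond ``As a result we have,'' relying on Theorem \ref{posledica2}, Theorem \ref{invVoronec}, Proposition \ref{glavna} applied to the pair $(L_c,l_c)$, and the commutative-diagram lemma relating $\mathbf{pr}_t$ and $\mathrm{pr}_t$ --- precisely the ingredients you assemble. Your write-up simply makes explicit the naturality of the Ehresmann curvature under $\phi$ and the transformation of the fiber derivative, which the paper leaves implicit.
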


\section{Time-dependent Chaplygin systems}\label{s6}

\subsection{Definitions}
In addition to the assumptions from section \ref{s5}, we now assume that the time-dependent fibration $\pi_t: Q\to S$ is determined by a time-dependent free action of an $r$--dimensional Lie group $G$ on $Q$ and the affine constraint space $\mathcal A_t$ and the distribution of virtual displacements $\mathcal D_t$ are $G$--invariant
Then the $G$--orbit $\mathcal O_t(q)$ through a point $q$ is the fiber $\pi_t(\pi_t(q))$, $t\in\R$ (see Fig. \ref{lifts}).
The decomposition
\begin{equation*}\label{transverzalnost1Ch}
T_q Q=\mathcal D_t\vert_q \oplus \mathcal W_t\vert_q, \qquad \mathcal W_t\vert_q=\ker d\pi_t(q)=T_q(\mathcal O_t(q)),
\end{equation*}
is $G$--invariant and $\mathcal D_t$ is a principal connection of a time-dependent principal bundle $G\longrightarrow Q\longrightarrow S=Q/G$ (here a time $t$ is fixed).
On the other hand, the decomposition
\begin{equation}\label{transverzalnost2Ch}
T_{(q,t)}(Q\times\R)=\underline{\mathcal A}_{(q,t)} \oplus \mathcal W_t\vert_q,
\end{equation}
is also $G$--invariant and $\underline{\mathcal A}$ is a principal connection of the principal bundle
\begin{equation*}
\xymatrix@R28pt@C28pt{
G \ar@{^{}->}[r]  & Q\times \R \ar@{^{}->}[d]^{\underline{\pi}}  \\
  & S\times \R }, \qquad \underline{\pi}(q,t)=(\pi_t(q),t).
\end{equation*}

Let, $y=\pi_t(q)$. For vectors $\xi\in T_y S$, $\underline{\xi}\in T_{(y,t)}(S\times \R)$, the \emph{horizontal lifts}
$\mathbb{H}^q_t({\xi})\in {\mathcal D}_t\vert_q$ and $\mathbb{H}^{(q,t)}(\underline{\xi})\in \underline{\mathcal A}_{(q,t)}$ are the unique
vectors, such that
\[
d{\pi_t}(\mathbb{H}^q_t(\xi))={\xi}, \qquad d\underline{\pi}(\mathbb{H}^{(q,t)}(\underline{\xi}))=\underline{\xi}.
\]

In addition, we define the \emph{affine horizontal lift} of $\xi\in T_y S$ by (see Fig. \ref{lifts})
\[
\mathbb{A}^q_t(\xi)=\mathrm{pr}_t \mathbb H^q_t\big(\xi\big)=\mathbb H^q_t\big(\xi\big)+\theta_t\vert_q\in\mathcal A_t\vert_q.
\]

\begin{figure}[ht]
{\centering
{\includegraphics[width=10cm]{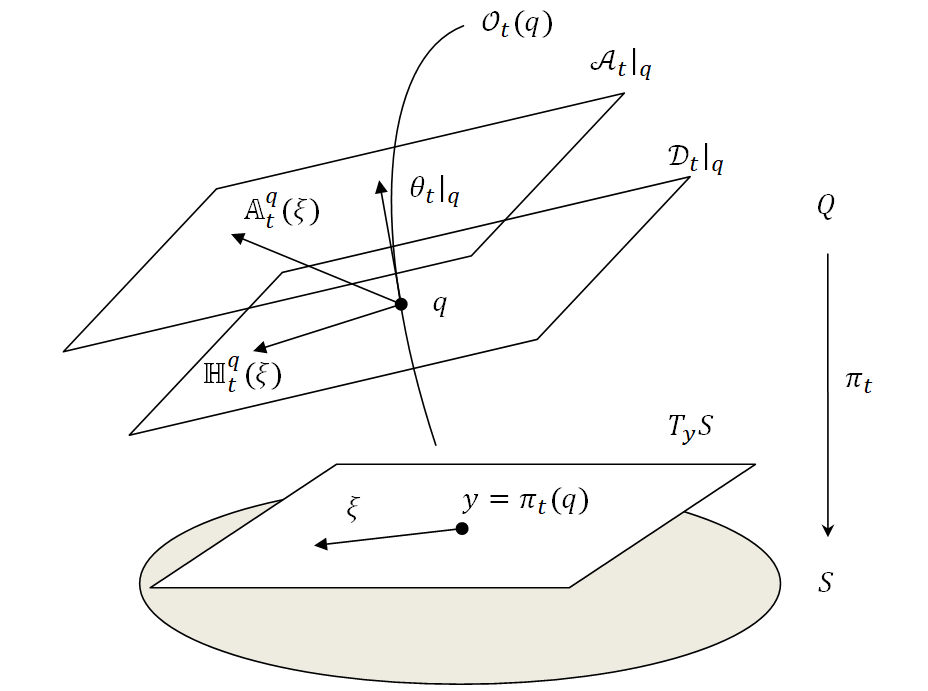}}
\caption{The horizontal and affine horizontal lift of a vector $\xi\in T_y S$.}\label{lifts}}
\end{figure}

Note that the Ehresmann curvature $\underline{B}$ of the principal connection \eqref{transverzalnost2Ch} via the identification $\mathcal W_t\cong \mathfrak g$ can be understood as a $\mathfrak g$-valued 2-form on $Q$.

By analogy with time-independent $G$--Chaplygin systems \cite{Koi1992, BKMM} and $G$--Chaplygin systems with gyroscopic forces \cite{DGJ2} we define.

\begin{dfn}
Let $L$ be a time-dependent $G$--invariant Lagrangian and let $\mathbf F$ be a field of time-dependent $G$--invariant non-potential forces on $Q$.
The \emph{reduced Lagrangian} $L_{red}(y,\dot y,t)$ and the \emph{reduced field of non-potential forces} $\mathbf F(y,\dot y,t)$ on $S$ are defined by
\[
L(y,\dot y,t):=L(q,\mathbb A^q_t(\dot y),t),  \quad \mathbf F_{red}(y,\dot y,t)(\xi):=\mathbf F(\mathbb A^q_t(\dot y),t)(\mathbb H_t^q(\xi)), \quad \xi\in T_y S,
\]
where $q$ is any element of the fiber $\pi_t^{-1}(y)$. The \emph{JK term} is defined by
\[
{\mathbf{JK}}(y,\dot y,t)(\xi):=
\mathbb FL(q,\mathbb A_t^q(\dot y),t)(\underline{B}\big(\mathbb H^{(q,t)}\big(\dot y+\frac{\partial}{\partial t}\big),\mathbb H^q_t(\xi))\big)
\]
\end{dfn}

In local coordinates $(y^1,\dots,y^m)$ on $S=Q/G$ we have
\[
{\mathbf{JK}}=\sum_{i=1}^m {\mathbf{JK}}_{i}(y,\dot y,t) dy^i, \qquad  \mathbf F_{red}=\sum_{i=1}^m F_{red,i}(y,\dot y,t) dy^i.
\]

The equations \eqref{voronec6} are $G$--invariant and they reduce to $TS$:
\begin{equation}
\delta L_{red}(\xi)+\mathbf F_{red}(\xi)= \mathbf{JK}(y,\dot y,t)(\xi)\qquad
\text{for all} \quad \xi\in T_y S,
\label{ChaplyginRed}
\end{equation}
or, in local coordinates,
\begin{equation*}
\frac{d}{dt}\frac{\partial l}{\partial
\dot y_i}=\frac{\partial L}{\partial y_i}+F_{red,i}(y,\dot y,t)-\mathbf{JK}_i(y,\dot y,t), \qquad i=1,\dots,m.
\label{ChaplyginRedLoc}
\end{equation*}

\begin{dfn}
We refer to $(Q,L,\mathbf F,\mathcal A,G)$ as a \emph{time-dependent $G$--Chaplygin system}, and to
$(S,L_{red},\mathbf F_{red}, {\mathbf{JK}})$ as a \emph{reduced time-dependent $G$--Chaplygin system}.
\end{dfn}

We summarize the above consideration in the following statement.

\begin{thm}\label{redukcijaSistema}
A solution $\gamma(t)$ of the time-dependent $G$--Chaplygin system $(Q,L,\mathbf F,\mathcal A,G)$ projects to  a solution $\Gamma(t)=\pi_t(\gamma(t))$ of the reduced time-dependent $G$--Chaplygin system
$(S,L_{red},\mathbf F_{red}, {\mathbf{JK}})$. Conversely, let $\Gamma(t)$ be a solution of the reduced system \eqref{ChaplyginRed} with the initial conditions $\Gamma(0)=y_0$, $\dot \Gamma(0)=Y_0\in T_{y_0} S$
and let $q_0\in \pi^{-1}(y_0)$. Then the affine horizontal lift $\gamma(t)$ of $\Gamma(t)$ through $q_0$ is the solution of the original system  with the initial conditions $\gamma(0)=y_0$, $\dot\gamma(0)=\mathbb A_t^{q_0}(Y_0)\in \mathcal A_t\vert_{q_0}$.
\end{thm}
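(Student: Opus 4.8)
The plan is to deduce everything from the invariant Voronec equation \eqref{voronec6} of Theorem \ref{invVoronec}, specialised to $G$--invariant data. The key point is that, for an admissible curve $\gamma$ with projection $\Gamma(t)=\pi_t(\gamma(t))$ and a virtual displacement of the form $\eta=\mathbb H^{\gamma(t)}_t(\xi)$, $\xi\in T_{\Gamma(t)}S$, the three terms of \eqref{voronec6} coincide with the three terms of \eqref{ChaplyginRed}: the force term $\mathbf F(\eta)\vert_\gamma$ is, by the definition of $\mathbf F_{red}$, equal to $\mathbf F_{red}(\Gamma,\dot\Gamma,t)(\xi)$; the curvature term $\mathbb FL(q,\mathrm{pr}_t(\dot q),t)(\underline B(\dot q+\frac{\partial}{\partial t},\eta))\vert_\gamma$ is, by the definition of $\mathbf{JK}$, equal to $\mathbf{JK}(\Gamma,\dot\Gamma,t)(\xi)$; and $\delta L_c(\eta)\vert_\gamma$ equals $\delta L_{red}(\xi)\vert_\Gamma$. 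To make this precise I would first record the elementary fact that a curve $\gamma$ is admissible if and only if $\dot\gamma(t)+\frac{\partial}{\partial t}$ is $\underline{\mathcal A}$--horizontal, equivalently $\dot\gamma(t)=\mathbb A^{\gamma(t)}_t(\dot\Gamma(t))$; this follows from $\dot\gamma\in\mathcal A_t=\mathcal D_t+\theta_t$ and the formulas for $\mathrm{pr}_t$ and $\mathbb A^q_t$, using that $d\pi_t$ restricts to an isomorphism $\mathcal D_t\vert_q\to T_{\pi_t(q)}S$ with inverse $\mathbb H^q_t$ (so that every $\eta\in\mathcal D_t\vert_{\gamma(t)}$ is of the form $\mathbb H^{\gamma(t)}_t(\xi)$, $\xi=d\pi_t(\eta)$). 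Granting this together with the identity $\delta L_c(\eta)\vert_\gamma=\delta L_{red}(\xi)\vert_\Gamma$, both halves of the theorem follow quickly.

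The heart of the argument is the identity $\delta L_c(\eta)\vert_\gamma=\delta L_{red}(\xi)\vert_\Gamma$, i.e. that $L_c$ is ``basic''. I would argue it as follows. Since the fibers of $\pi$ are the $G$--orbits, since $L,\mathbf F,\mathcal A_t,\mathcal D_t$ are $G$--invariant, and since $G$ acts transitively on each fiber, the restriction $L_c\vert_{\mathcal A_t}=L\vert_{\mathcal A_t}$ is a $G$--invariant function of $(q,\dot y,t)$ with $\dot y=d\pi_t(\dot q)$, hence descends to $L_{red}$ on $TS\times\R$; moreover, off the constraint, $L_c(q,\dot q,t)=L(q,\dot q^h+\theta_t,t)$ depends on $\dot q$ only through its horizontal component. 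Concretely, in adapted coordinates in which $\pi\colon(q^1,\dots,q^m,q^{m+1},\dots,q^n)\mapsto(q^1,\dots,q^m)=(y^1,\dots,y^m)$ and $G$ acts along the fibers, one obtains $L_c=L_c(q^1,\dots,q^n,\dot q^1,\dots,\dot q^m,t)$ independent of $q^{m+1},\dots,q^n$, i.e. $L_c(q,\dot q,t)=L_{red}(y,\dot y,t)$. Then for $\eta\in\mathcal D_t\vert_{\gamma(t)}$, whose components satisfy $\eta^{m+\nu}=\sum_i a_{\nu i}\eta^i$ and $\eta^i=\xi^i$ for $i\le m$, the Euler--Lagrange expression of $L_c$ paired with $\eta$ collapses to that of $L_{red}$ paired with $\xi$, which is $\delta L_{red}(\xi)\vert_\Gamma$. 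The point requiring care is that $\theta_t$ and the $G$--action are time dependent; the relevant fact is that $\theta_t$ is equivariant for the $G$--action at each fixed $t$ (even though $\frac{\partial}{\partial t}$ is not $G$--invariant on $Q\times\R$), which is exactly what makes $L_c$, and likewise $\mathbf{JK}$, independent of the chosen point of the fiber. This is the analogue, in the present affine time-dependent setting, of the reduction of the Voronec equations \eqref{voronec3} in Section \ref{s5}: the curvature identities relating $\underline B$ to the coefficients \eqref{voronec4}--\eqref{voronec5} are precisely what carry the ``off-constraint'' contributions of $\frac{d}{dt}\frac{\partial L_c}{\partial\dot q^i}$ into the $\mathbf{JK}$ term. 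I expect this to be the main obstacle; everything else is bookkeeping.

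Finally, the two implications. If $\gamma$ solves $(Q,L,\mathbf F,\mathcal A,G)$, then by Theorem \ref{invVoronec} equation \eqref{voronec6} holds for all $\eta\in\mathcal D_t\vert_{\gamma(t)}$; writing each such $\eta$ as $\mathbb H^{\gamma(t)}_t(\xi)$ and using the identifications above turns \eqref{voronec6} into \eqref{ChaplyginRed} for all $\xi\in T_{\Gamma(t)}S$, so $\Gamma(t)=\pi_t(\gamma(t))$ solves the reduced system, with $\Gamma(0)=\pi_0(\gamma(0))$ and $\dot\Gamma(0)=d\pi_0(\dot\gamma(0))$. Conversely, given a solution $\Gamma$ of \eqref{ChaplyginRed} with $\Gamma(0)=y_0$, $\dot\Gamma(0)=Y_0$ and $q_0\in\pi^{-1}(y_0)$, I would take $\gamma$ to be the affine horizontal lift of $\Gamma$ through $q_0$, i.e. the $\underline{\mathcal A}$--horizontal lift of $t\mapsto(\Gamma(t),t)$ in $S\times\R$ starting at $(q_0,0)$; since $\underline{\mathcal A}$ is a principal connection this lift exists and is unique over the whole interval on which $\Gamma$ is defined, and since its velocity has $\frac{\partial}{\partial t}$--component $1$ it has the form $(\gamma(t),t)$ with $\pi_t(\gamma(t))=\Gamma(t)$, $\dot\gamma=\mathbb A^{\gamma(t)}_t(\dot\Gamma)$ (so $\gamma$ is admissible), $\gamma(0)=q_0$ and $\dot\gamma(0)=\mathbb A^{q_0}_0(Y_0)$. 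For every $\eta=\mathbb H^{\gamma(t)}_t(\xi)\in\mathcal D_t\vert_{\gamma(t)}$ the same identifications turn \eqref{ChaplyginRed} (which $\Gamma$ satisfies) back into \eqref{voronec6} for $\eta$; as every element of $\mathcal D_t\vert_{\gamma(t)}$ is such an $\eta$, Theorem \ref{invVoronec} gives that $\gamma$ solves $(Q,L,\mathbf F,\mathcal A)$, and $G$--invariance of the data makes it a solution of the $G$--Chaplygin system. The only genuinely nontrivial step remains the ``$L_c$ is basic'' identity of the second paragraph; the rest reduces to the definitions and to standard existence/uniqueness of parallel transport.
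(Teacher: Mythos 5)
Your proposal is correct and follows essentially the same route as the paper: the paper gives no separate proof, stating the theorem as a summary of the preceding observation that the invariant Voronec equation \eqref{voronec6} is $G$--invariant and therefore descends, term by term, to \eqref{ChaplyginRed} on $TS$, with the converse handled by the affine horizontal lift with respect to the principal connection $\underline{\mathcal A}$. Your write-up simply supplies the details (the basic-ness of $L_c$ and the existence and uniqueness of the horizontal lift) that the paper leaves implicit.
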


\begin{rem}\label{AbelovSlucaj}
In the case the group $G$ is Abelian, we do not need an invariant formulation of the Voronec equations, which we obtain in Theorem \ref{invVoronec} and Proposition \ref{movingVoronec}. One can perform the reduction procedure directly by using the classical Voronec equations \eqref{voronec3}, where the Lagrangian $L$ and the constraints \eqref{voronec1} do not depend on $q^{m+\nu}$ and the action of the group $G$ is simply given by the translations in the coordinates $q^{m+\nu}$, $\nu=1,\dots,n-m$
(classical Chaplygin systems \cite{Chaplygin1912}).
\end{rem}

\subsection{Natural mechanical systems}\label{NMS}

The above construction can be related to the standard construction for time-independent natural mechanical systems.
In the case that we have a natural mechanical system with the Lagrangian
\[
L(q,\dot q,t)=\frac12 K_t(\dot q,\dot q)+\Delta_t(\dot q)-V(q,t),
\]
we define

\begin{dfn}
Let $K_t$, $\Delta_t$, and $V(q,t)$ be a $G$--invariant metric, 1-form, and a potential  on $Q$. The \emph{reduced metric} $K_{red,t}$, the \emph{reduced 1-form} $\Delta_{red,t}$ and the \emph{reduced potential} $V_{red}$ on $S$ are defined by:
\begin{align*}
&K_{red,t}(\xi_1,\xi_2)\vert_y=K_t(\mathbb H^q_t(\xi_1),\mathbb H^q_t(\xi_2))\vert_q,  \\
&\Delta_{red,t}(\xi)=K_t(\theta_t\vert_q,\mathbb H^q_t(\xi))+\Delta_{t}(\mathbb H^q_t(\xi))\big),\\
&V_{red}(y,t)=V(q,t)-\frac12 K_t(\theta_t\vert_q,\theta_t\vert_q)-\Delta_t(\theta_t\vert_q), \qquad y=\pi(q).
\end{align*}
\end{dfn}

The reduced Lagrangian is then
\begin{align*}
L_{red}(y,\dot y,t)=\frac12 K_{red,t}(\dot y,\dot y))+\Delta_{red,t}(\dot y)-V_{red}(y,t).
\end{align*}

On the other hand, the JK term has the form
\begin{align*}\label{JKRHS}
{\mathbf{JK}}(y,\dot y,t)(\xi)={\mathbf{JK}}^2(y,\dot y,t)(\xi)+{\mathbf{JK}}^1(y,\dot y,t)(\xi)+{\mathbf{JK}}^0(y,\dot y,t)(\xi),
\end{align*}
where
\begin{align*}
{\mathbf{JK}}^2(y,\dot y,t)(\xi)=&K_t\big(\mathbb H^q_t(\dot y), B_t(\mathbb H^q_t(\dot y),\mathbb H^q_t(\xi))\big)\\
{\mathbf{JK}}^1(y,\dot y,t)(\xi)=&K_t\big(\theta_t\vert_q, B_t(\mathbb H^q_t(\dot y),\mathbb H^q_t(\xi)\big)\\
&\nonumber +\Delta_t\big(B_t(\mathbb H^q_t(\dot y),\mathbb H^q_t(\xi)\big)+K_t\big(\mathbb H^q_t(\dot y),\underline{B}\big(\frac{\partial}{\partial t},\mathbb H^q_t(\xi)\big)\big)\\
{\mathbf{JK}}^0(y,\dot y,t)(\xi)=&K_t\big(\theta_t,\underline{B}\big(\frac{\partial}{\partial t},\mathbb H^q_t(\xi)\big)\big)
+\Delta_t\big(\underline{B}\big(\frac{\partial}{\partial t},\mathbb H^q_t(\xi)\big)\big).
\end{align*}

Here the first term is quadratic in velocities and is a time-dependent version of the well-studied $(0,3)$-tensor in nonholonomic mechanics
\begin{equation*}\label{eq:Sigma}
\Sigma_t(\xi_1,\xi_2,\xi_3)\vert_y=K_t\big(\mathbb H^q_t(\xi_1),B_t(\mathbb H^q_t(\xi_2),\mathbb H^q_t(\xi_3))\big)\vert_q, \qquad q\in\pi^{-1}(y).
\end{equation*}
The $(0,3)$-tensor $\Sigma_t$ is skew-symmetric with respect to the second and third argument, and
\[
{\mathbf{JK}}^2(y,\dot y,t)(\xi)=\Sigma_t(\dot y,\dot y,\xi).
\]


After \cite{Chaplygin1912}, one of the central problems in the study of natural-mechanical Chaplygin systems is the Chaplygin-Hamiltonization using the time reparamitrazition (see e.g \cite{BBM2013, CCL2002, GajJov2019b, Naranjo2019b, Naranjo2020, DGJ2} and references therein).
In \cite{BBM2013} and \cite{DGJ2} the problem is extended to the case of an addition of a gyroscopic term $\mathbf F_{red}$.

In the time-dependent case, the simplest situation is when we use a distinguish reference frame in which $\theta_t\equiv 0$ and the linear term in the Lagrangian also vanishes $\Delta_t \equiv 0$. Then the JK term of zero degree ${\mathbf{JK}}^0$ vanishes and the linear JK term simplifies to
\[
{\mathbf{JK}}^1(y,\dot y,t)(\xi)=K_t(\mathbb H^q_t(\dot y),\underline{B}\big(\frac{\partial}{\partial t},\mathbb H^q_t(\xi)\big))\\
\]
Probably, the study of the problem of Hamiltonization of the time-dependent Chaplygin system should be started with these assumptions.
The following is a first step in this direction.

\subsection{Rolling of a disc over a circle of variable radius - a nonholonomic system with one-dimensional constraint distribution}
An illustrative example is the rolling without sliding of a disc with radius $r$, mass $m$ and centre $S$ over a vertical circle with variable radius $R(t)$ and centre at point $O$. The configuration space is $\R^2\{s_1,s_2\}\times S^1\{\psi\}$. The coordinates $(s_1,s_2)$ determine the position of the centre of the disc
$S$, $\overrightarrow{OS}=s_1 \vec e_1+ s_2\vec e_2$, and the angle $\psi$ is the angle of rotation $B(\psi)$ between the fixed reference frame $[O,\vec e_1,\vec e_2]$ and the reference frame $[S,\vec E_1,\vec E_2]$ attached to the disc (see Fig. \ref{disk}):
\[
\vec E_1=\cos\psi \vec e_1+\sin\psi\vec e_2, \qquad \vec E_2=-\sin\psi \vec e_1+\cos\psi\vec e_2.
\]
In other words, the coordinates of a point $X$ in two reference frames are related by the Euclidean motion
\begin{equation}\label{AfTr}
\begin{pmatrix}
q_1 \\
q_2
\end{pmatrix}=
\begin{pmatrix}
\cos\psi & -\sin\psi\\
\sin\psi & \cos\psi
\end{pmatrix}
\begin{pmatrix}
x_1 \\
x_2
\end{pmatrix}+
\begin{pmatrix}
s_1\\
s_2
\end{pmatrix},
\end{equation}
$\overrightarrow{OX}=q_1 \vec e_1+ q_2\vec e_2$, $\overrightarrow{SX}=x_1 \vec E_1+ x_2\vec E_2$.

\begin{figure}[ht]
{\centering
{\includegraphics[width=7cm]{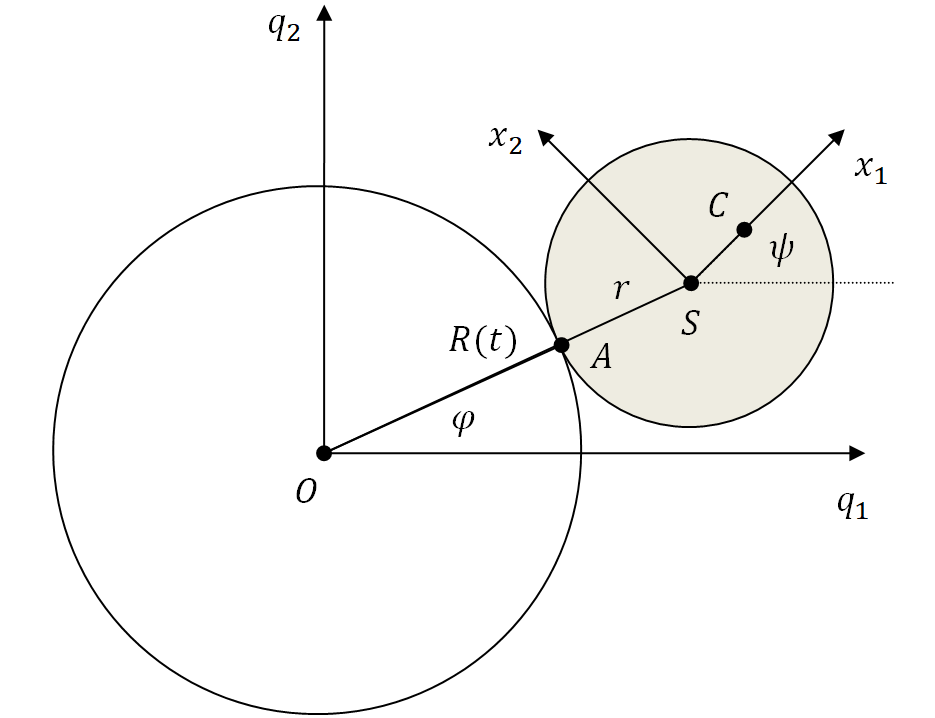}}
\caption{Rolling of a disc of radius $r$, center $S$ and mass center $S$ over a vertical circle of variable radius $R(t)$ and center $O$.}\label{disk}}
\end{figure}

Let us consider a point $X(x_1,x_2)$ that is fixed in the frame $[S,\vec E_1,\vec E_2]$. From \eqref{AfTr} it follows that its velocity in the frame $[O,\vec e_1,\vec e_2]$
is
\begin{align*}
\begin{pmatrix}
v_1 \\
v_2
\end{pmatrix}=& \frac{d}{dt}\big(
\begin{pmatrix}
\cos\psi & -\sin\psi\\
\sin\psi & \cos\psi
\end{pmatrix}
\begin{pmatrix}
x_1 \\
x_2
\end{pmatrix}+
\begin{pmatrix}
s_1\\
s_2
\end{pmatrix}\big)
=
\dot\psi
\begin{pmatrix}
-\sin\psi & -\cos\psi\\
\cos\psi & -\sin\psi
\end{pmatrix}
\begin{pmatrix}
x_1 \\
x_2
\end{pmatrix}
+\begin{pmatrix}
\dot s_1\\
\dot s_2
\end{pmatrix}
\\
=&
\dot\psi
\begin{pmatrix}
-\sin\psi & -\cos\psi\\
\cos\psi & -\sin\psi
\end{pmatrix}
\begin{pmatrix}
\cos\psi & -\sin\psi\\
\sin\psi & \cos\psi
\end{pmatrix}^{-1}
\begin{pmatrix}
X_1\\
X_2
\end{pmatrix}
+
\begin{pmatrix}
\dot s_1\\
\dot s_2
\end{pmatrix}
\\
=&
\begin{pmatrix}
0 & -\dot\psi\\
\dot\psi &  0
\end{pmatrix}
\begin{pmatrix}
X_1\\
X_2
\end{pmatrix}
+
\begin{pmatrix}
\dot s_1\\
\dot s_2
\end{pmatrix}, \qquad \text{where} \qquad \overrightarrow{SX}=X_1 \vec e_1+ X_2\vec e_2,
\end{align*}
that is,
\begin{equation}\label{brzina}
(v_1,v_2)=\dot\psi(-X_2,X_1)+(\dot s_1,\dot s_2).
\end{equation}

We have a holonomic constraint
\[
f(s,t)=s_1^2+s_2^2-(R(t)+r)^2=0,
\]
and instead of the coordinates $(s_1,s_2)$ we can use the angular coordinate $\varphi$, so that
\[
(s_1,s_2)=((R+r)\cos\varphi,(R+r)\sin\varphi).
\]
For the configuration space $Q$ we can therefore take a torus $Q=S^1\times S^1\{\varphi,\psi\}$.

Let $A$ be a point of contact. The non-sliding condition at  $A$
means that the velocity of the point $A$, considered as a point of the circle with coordinates $(R\cos\varphi,R\sin\varphi)$ in the frame $[O,\vec e_1,\vec e_2]$,
\begin{equation}\label{contact}
(v_{A1},v_{A2})=\dot R\cdot (\cos\varphi,\sin\varphi),
\end{equation}
coincides to the velocity of the point $A$, which is considered as the point of the disc with the coordinates $(x_{A1},x_{A2})$ in the frame $[S,\vec E_1,\vec E_2]$.
From \eqref{brzina} and \eqref{contact} we get
\begin{align*}
\dot R(\cos\varphi,\sin\varphi)=   
\dot \psi(r\sin\varphi,-r\cos\varphi)+\dot R(\cos\varphi,\sin\varphi)+(R+r)\dot\varphi(-\sin\varphi,\cos\varphi),
\end{align*}

Therefore, we obtain a homogeneous constraint
\begin{equation}\label{VEZA}
\mathcal A_t=\mathcal D_t\colon \qquad r\dot\psi-(R+r)\dot\varphi=0.
\end{equation}

If the constraint \eqref{VEZA} is stationary, $\dot R \equiv 0$, then it is integrable. For a given initial condition, the system is a 1-dimensional time-independent holonomic system and therefore completely integrable. Otherwise, we have a nonholonomic system (see Example \ref{primer1}) with a constraint defining the Ehresmann connections for the fibrations $\pi\colon Q\to S^1$, $\pi(\varphi,\psi)=\varphi$ and $\pi(\varphi,\psi)=\psi$. We will use the first variant in the following.

Without loss of generality, we assume that the centre of mass $C$ of the disc has the coordinates $(x_{C1},x_{C2})=(c,0)$. Then
\[
(X_{C1},X_{C2})=(c\cos\psi,c\sin\psi), \quad (q_{C1},q_{C2})=((R+r)\cos\varphi,(R+r)\sin\varphi)+(c\cos\psi,c\sin\psi),
\]
and the gravitational potential energy of the system is
\[
v_g(\varphi,\psi,t)=mg q_{C2}=mg\big((R(t)+r)\sin\varphi+c\sin\psi\big),
\]
while the kinetic energy is given by
\begin{align*}
T(\varphi,\psi,\dot\varphi,\dot\psi,t)=& \frac12 m \big(\dot s_1^2+\dot s_2^2\big)+\frac12 I\dot\psi^2+m\dot\psi(X_{C1}\dot s_2-X_{C2}\dot s_1)\\
=& \frac12 m \big((R+r)^2\dot\varphi^2+\dot R^2\big)+\frac12 I\dot\psi^2\\
& + c m \dot\psi\big(\dot R(\cos\psi\sin\varphi -\sin\psi\cos\varphi)+\dot\varphi(R+r)(\cos\psi\cos\varphi +\sin\psi\sin\varphi)\big),
\end{align*}
where $I$ is the moment of inertia of the disc with respect to the point $S$.

As a result, we obtain a time-dependent natural mechanical system on a torus with the Lagrangian
$L=\frac12 K_t+\Delta_t-V$, where
\begin{align*}
&K_t\big((\dot\varphi,\dot\psi),(\dot\varphi,\dot\psi)\big)=m(R+r)^2\dot\varphi^2+I\dot\psi^2+2c m \dot\psi\dot\varphi(R+r)\cos(\varphi-\psi),\\
&\Delta_t\big((\dot\varphi,\dot\psi)\big)= c m \dot\psi\dot R\sin(\varphi-\psi),\\
&V(\varphi,\psi,t)=mg\big((R(t)+r)\sin\varphi+c\sin\psi\big)-\frac12 m\dot R^2.
\end{align*}
Note that the term $\frac12 m\dot R^2$ can be removed from the potential as it does not depend on the coordinates.

The system simplifies considerably if the centre of mass $C$ coincides with the centre of the disc $S$, $c=0$, i.e. the discs is balanced. Then the Lagrangian
\begin{align*}
L(\varphi,\psi,\dot\varphi,\dot\psi,t)=\frac12 \big(m(R(t)+r)^2\dot\varphi^2+I\dot\psi^2\big)-
mg(R(t)+r)\sin\varphi
\end{align*}
does not depend on $\psi$ and we obtain a time-dependent $S^1$-Chaplygin system. Since the dimension of the reduced space is one-dimensional, it is convenient to use the Voronec equations directly (see Remark \ref{AbelovSlucaj}).

Let $a(t):=(R(t)+r)/r$. Then the constraint, the constrained (reduced) Lagrangian, the term $K_\psi$ and the curvature are
\begin{align*}
&\dot\psi=a(t)\dot\varphi,\\
&L_c(\varphi,\dot\varphi,t)=\frac12 (mr^2+I)a^2(t)\dot\varphi^2-
mr g a(t)\sin\varphi,\\
& K_\psi (\varphi,\dot\varphi,t)=\frac{\partial L}{\partial \dot\psi}\vert_{\dot\psi=a(t)\dot\varphi}=Ia(t)\dot\varphi,\\
& B_{\varphi,t}^\psi=A_\varphi^\psi=\dot a(t).
\end{align*}

The JK term therefore only has the linear velocity term
\[
\mathbf{JK}=\mathbf{JK}^1=-K_\psi A_{\varphi}^\psi= -Ia(t)\dot a(t)\dot\varphi d\varphi
\]
and the (reduced) Voronec equation
\begin{equation*}
 \frac{d}{dt}\frac{\partial L_c}{\partial \dot\varphi}=\frac{\partial L_c}{\partial\varphi} + K_\psi A_{\varphi}^\psi
\end{equation*}
has the form
\begin{equation}\label{1reduced}
 (mr^2+I)a^2(t)\ddot\varphi + 2(mr^2+I)a(t)\dot a(t) \dot\varphi  =-mr g a(t)\cos\varphi + Ia(t)\dot a(t)\dot\varphi
\end{equation}

Note that if $I$ tends to zero, we obtain a holonomic system: the mathematical pendulum with variable length. For $a(t+T)=a(t)$ this is a basic example of a parametric resonance (see Ch. 5 in \cite{Ar} and \cite{BSL})).

Let us recall the Chaplygin Hamiltonisation of $G$-Chaplygin autonomous systems on the base space $S$ with local coordinates $y=(y^1,\dots,y^m)$: we are looking for a time reparametrization $d\tau= \mathcal N(y)dt$ so that in the new time $\tau$ the Euler-Lagrange equations with $\mathbf{JK}^2$ term become the usual Euler-Lagrange equations with respect to the new reduced Lagrangian obtained from the original one by using the transformation $dy/dt=\mathcal N dy/d\tau$.
(see e.g. \cite{Chaplygin1912, DGJ2}).
\footnote{We use the same symbol $\tau$ as for the time-mapping \eqref{fibracija} in section \ref{s4}, as it is convenient for the Chaplygin reparametrization}

To perform Hamiltonization for time-dependent $G$-Chaplygin systems, we need to modify the Chaplygin multiplier method to include a time-reparametrization
where $\mathcal N$ depends also on $t$.

\begin{lem}\label{hamiltonizacija}
Let us consider a one-dimensional Lagrangian problem with the Lagrangian
\[
L(y,\dot y,t)=\frac12 K(t)\dot y^2-V(y,t)
\]
 and a non-potential force $\mathbf F=B(t)\dot y dy$, where $y\in \R$, or $y$ is an angular coordinate on a circle $S^1$. For each $t_0\in \R$ there is a neighborhood $t_0\in (t_1,t_2)$ and a time-reparametrization $t=z(\tau)$, $\tau\in (\tau_1,\tau_2)$, so that in the new time $\tau$ the system becomes a Lagrangian problem without non-potential forces with the Lagrangian
\[
L^*(y,y',\tau):=L(y,\dot y, t)\vert_{\dot y=\mathcal N(\tau)y', t=z(\tau)}=\frac12 K(z(\tau))(\mathcal N(\tau) y')^2-V^*(y,\tau),
\]
 $V^*(y,\tau)=V(y,z(\tau))$, where $'$ denotes the derivative with respect to the new time $\tau$ and
 $\mathcal N(\tau)=1/z'$.
 \end{lem}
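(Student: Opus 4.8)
The plan is to compare two second--order ODEs for $y(t)$: the equation of motion of the original forced system in the time $t$, and the (force--free) Euler--Lagrange equation of the proposed Lagrangian $L^*$, written in the new time $\tau$ but transported back to $t$ by the chain rule. Requiring the two equations to coincide will pin down $\mathcal N$ uniquely up to an inessential rescaling of $\tau$, and positivity of $\mathcal N$ will produce the reparametrization on a neighbourhood of $t_0$.

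First I would write the equation of motion of $(L,\mathbf F)$ from the d'Alembert principle \eqref{dalamberF} in the one--degree--of--freedom case, $\tfrac{d}{dt}\tfrac{\partial L}{\partial\dot y}=\tfrac{\partial L}{\partial y}+B(t)\dot y$, that is
\[
K(t)\ddot y+\big(\dot K(t)-B(t)\big)\dot y=-\frac{\partial V}{\partial y}(y,t).
\]
Next, assuming a reparametrization $t=z(\tau)$ with $z'>0$ and setting $\mathcal N=1/z'$, so that $\tfrac{d}{d\tau}=\mathcal N^{-1}\tfrac{d}{dt}$ and $\dot y=\mathcal N y'$, the unforced Euler--Lagrange equation $\tfrac{d}{d\tau}\tfrac{\partial L^*}{\partial y'}=\tfrac{\partial L^*}{\partial y}$ of
\[
L^*=\tfrac12\,K(z(\tau))\,\mathcal N(\tau)^2 (y')^2-V(y,z(\tau))
\]
reads $\tfrac{d}{d\tau}\big(K\mathcal N^2 y'\big)=-\tfrac{\partial V}{\partial y}$. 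Substituting $\mathcal N^2 y'=\mathcal N\dot y$ and $\tfrac{d}{d\tau}=\mathcal N^{-1}\tfrac{d}{dt}$ and simplifying, this becomes
\[
K(t)\ddot y+\Big(\dot K(t)+K(t)\,\frac{\dot{\mathcal N}}{\mathcal N}\Big)\dot y=-\frac{\partial V}{\partial y}(y,t),
\]
with $\mathcal N$ now regarded as a function of $t$ along $t=z(\tau)$.

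Comparing the two displays forces $K\dot{\mathcal N}/\mathcal N=-B$, i.e. $\tfrac{d}{dt}\ln\mathcal N=-B(t)/K(t)$, whence
\[
\mathcal N(t)=\exp\!\Big(-\!\int_{t_0}^{t}\frac{B(s)}{K(s)}\,ds\Big).
\]
Since $K$ is a positive metric coefficient, $B/K$ is smooth near $t_0$ and $\mathcal N>0$ there; then $\tau(t):=\tau_0+\int_{t_0}^{t}\mathcal N(s)\,ds$ is a strictly increasing diffeomorphism of a neighbourhood $(t_1,t_2)\ni t_0$ onto an interval $(\tau_1,\tau_2)$, with inverse $z$. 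By construction $z'=1/(\mathcal N\circ z)$, so $\mathcal N(\tau):=\mathcal N(z(\tau))=1/z'$ as required, and the chain of equalities above is reversible: a $\tau$--parametrized solution of $L^*$ corresponds to a $t$--parametrized solution of $(L,\mathbf F)$, and conversely. This proves the lemma; a different choice of the integration constant merely multiplies $\mathcal N$ by a constant, i.e. rescales $\tau$ linearly.

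The main obstacle is bookkeeping rather than a genuine difficulty: one must keep careful track of whether $\mathcal N$ is being read as a function of $t$ or of $\tau$, apply the chain rules $\tfrac{d}{d\tau}=z'\tfrac{d}{dt}$ and $\dot y=\mathcal N y'$ consistently in \emph{both} directions, and verify that the reversed implication (from a solution of $L^*$ back to a solution of the forced system) carries no loss. The purely local formulation of the statement is exactly what lets us disregard global invertibility of $\tau(t)$ and the precise range of the new time.
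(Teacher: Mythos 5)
Your proof is correct and follows essentially the same route as the paper: both arguments equate the coefficient of the velocity term in the forced Euler--Lagrange equation with that of the unforced Euler--Lagrange equation of $L^*$, arrive at $\tfrac{d}{dt}\ln\mathcal N=-B(t)/K(t)$ (the paper's equation \eqref{2dim}, with $\mathcal N=\dot u$ viewed as a function of $t$), and integrate. The only cosmetic difference is that you apply the chain rule directly in the variable $t$, whereas the paper first derives $\tfrac{d\mathcal N}{d\tau}=-B(z(\tau))/K(z(\tau))$ in the new time and then converts it back to $t$ via the relation $\tfrac{d\mathcal N}{d\tau}\circ u=\tfrac{d}{dt}\ln\dot u$.
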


 \begin{proof}
The Euler--Lagrange equation of the system is
 \begin{equation}\label{EL1}
 \frac{d}{dt}\big(K\dot y\big)=K\ddot y+\dot K\dot y=-\frac{\partial V}{\partial y}+B(t)\dot y.
 \end{equation}

We denote the inverse of the mapping $z$ by $u$: $\tau=u(t)$. Then $\N=\dot u \circ z$.
By introducing a time reparametrization $t=z(\tau)$ we get
 \[
 \dot y\circ z=\frac{dy}{d\tau}\frac{d\tau}{dt}\circ z=y'\dot u\circ z=\mathcal N y', \qquad \ddot y\circ z=\mathcal N^2 y''+\mathcal N\mathcal N'y'.
 \]
Thus, the equation \eqref{EL1} in time $\tau$ takes the form
 \begin{equation}\label{EL2}
 K(z(\tau))\mathcal N^2 y''+K(z(\tau))\mathcal N\mathcal N'y' +K'(z(\tau))\mathcal N^2 y'=-\frac{\partial V^*}{\partial y}+B(z(\tau))\mathcal N y'.
 \end{equation}

On the other hand, the Euler--Lagrange equation of the Lagrangian $L^*$ with respect to the new time $\tau$ is
 \begin{equation*}\label{EL3}
 K(z(\tau))\mathcal N^2 y''+\big(K(z(\tau))\mathcal N^2\big)'y'=-\frac{\partial V^*}{\partial y}
 \end{equation*}

The last equation is equivalent to the equation \eqref{EL2} if and only if
\[
B(z(\tau))\mathcal N y' =-K(z(\tau))\mathcal N\mathcal N' y'.
\]

Since $\mathcal N\ne 0$ and the identity applies to all $y'$, we obtain the differential equation
\begin{equation}\label{1dim}
\frac{d\mathcal N}{d\tau}=-\frac{B(z(\tau))}{K(z(\tau))}:=f(z(\tau)).
\end{equation}

If we consider $\tau$ as a function of $t$,
\[
\N(\tau(t))=\N\circ u(t)=\dot u \circ z \circ u(t)=\dot u(t),
\]
we obtain a tricky relation
\begin{equation*}
\frac{d\N}{d\tau}\circ u =\frac{d(\N\circ u\circ z)}{d\tau}\circ u
=\frac{d(\dot u\circ z)}{d\tau}\circ u=\frac{d\dot u}{dt}\cdot \big(\frac{dt}{d\tau}\circ u\big) =\ddot u \cdot \big(z' \circ u\big)=
\frac{\ddot u}{\dot u}=\frac{d\ln(\dot u(t))}{dt}.
\end{equation*}

Therefore, from \eqref{1dim} we get the equation
\begin{equation}\label{2dim}
\frac{d\ln(\dot u(t))}{dt}=f\circ z\circ u (t)=f(t),
\end{equation}
which can be easily solved
\[
\dot u(t)=\N_0\exp(\int_{t_0}^t f(s)ds), \qquad \dot u(t_0)=\N_0>0.
\]

Finally, we find a time-reparametrization $z=z(\tau)$ by inverting the integral
\[
\tau-\tau_0=u(t)=\int_{t_0}^{t} \dot u(s)ds, \qquad u(t_0)=\tau_0.
\]
\end{proof}

\begin{exm}
Let us consider the harmonic oscillator $L(y,\dot y)=\frac12 \dot y^2-\frac12 \omega^2 y^2$ with the damping force $\mathbf F=-B_0\dot y dy$, $B_0=const> 0$.
Let us apply the construction described in Lemma \ref{hamiltonizacija}. The equation \eqref{2dim} is as follows
\[
\frac{d\ln(\N(t))}{dt}=B_0.
\]

The solution with the initial conditions $\dot u\vert_{t=0}=\N_0> 0$ is $\dot u(t)=\N_0e^{B_0t}$. From this follows,
\begin{equation}\label{zamVr}
\tau-\tau_0=\int_{0}^{t} \N_0 e^{B_0s} ds=\frac{\N_0}{B_0} e^{B_0t}-\frac{\N_0}{B_0}, \quad  t=z(\tau)=\frac{1}{B_0}\ln\frac{B_0(\tau-\tau_0)+\N_0}{\N_0},
\end{equation}
where $\tau\in(\tau_0-\N_0/B_0,\infty)$ and $t\in(-\infty,\infty)$. From $\dot u(t)=\N_0e^{B_0t}$ and \eqref{zamVr} we obtain
\[
\N(\tau)=\dot u\circ z(\tau)=B_0(\tau-\tau_0)+\N_0.
\]

This means that the harmonic oscillator with the damping force $\mathbf F=-B_0\dot y dy$ becomes the standard Lagrangian system with the Lagrangian
\[
L^*(y,y',\tau)=\frac12\big(B_0(\tau-\tau_0)+\N_0\big)^{2}y'^2-\frac12 \omega^2 y^2,
\]
for the new time $\tau\in(\tau_0-\N_0/B_0,\infty)$.
\end{exm}

Let us return to the reduced problem of rolling a disc over a circle with a variable radius.

\begin{thm}\label{hamiltonizacija*}
Let us consider the rolling without sliding of a balanced disc of radius $r$ and mass $m$
over a vertical circle of variable radius $R(t)$.
Let us assume that $a(t):=(R(t)+r)/r$ is bounded: $a(t)<a_{max}$, for all $t\in\R$.
Then a time-reparametrization $t=z(\tau)$, $z\colon \R\to \R$, given by inverting the integral
\begin{equation}\label{inverz}
\tau= u(t)=\int_{0}^t {a(s)}^{-\alpha} ds, \qquad \alpha=\frac{I}{mr^2+I},
\end{equation}
transforms the reduced system \eqref{1reduced} into the Lagrangian problem with the Lagrangian
 \[
L_c^*(\varphi,\varphi',\tau)=\frac12 (mr^2+I)a^2(z(\tau))\big(\mathcal N(\tau)\varphi'\big)^2-
mr g a(z(\tau))\sin\varphi.
 \]
In particular, if $R(t)$ is $T$--periodic, the transformed system is also periodic with period
\begin{equation}\label{period}
u(T)=\int_{0}^T {a(s)}^{-\alpha} ds.
\end{equation}
\end{thm}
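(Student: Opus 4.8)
The plan is to recognize the reduced equation \eqref{1reduced} as a special case of the Euler--Lagrange equation \eqref{EL1} treated in Lemma \ref{hamiltonizacija}, and then to run the construction of that lemma with the appropriate data. First I would rewrite the left-hand side of \eqref{1reduced} as the total derivative $\frac{d}{dt}\big((mr^2+I)a^2(t)\dot\varphi\big)$ and read off the ingredients of Lemma \ref{hamiltonizacija}: the coordinate $y=\varphi$ is an angular coordinate on $S^1$ (so the lemma applies), $K(t)=(mr^2+I)a^2(t)$, the potential $V(\varphi,t)=mrg\,a(t)\sin\varphi$, and the non-potential force $\mathbf F=B(t)\dot\varphi\,d\varphi$ with $B(t)=Ia(t)\dot a(t)$. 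With these identifications equation \eqref{1reduced} is literally \eqref{EL1}.

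Next I would specialize the computation inside the lemma. The function $f$ of \eqref{1dim}--\eqref{2dim} becomes $f(t)=-B(t)/K(t)=-\frac{I}{mr^2+I}\frac{\dot a(t)}{a(t)}=-\alpha\,\frac{d}{dt}\ln a(t)$, with $\alpha=I/(mr^2+I)$. Solving \eqref{2dim} gives $\dot u(t)=\mathcal N_0\exp\big(\int_0^t f(s)\,ds\big)=\mathcal N_0\big(a(0)/a(t)\big)^{\alpha}$; the normalization $\mathcal N_0>0$ is free, and choosing $\mathcal N_0=a(0)^{-\alpha}$ yields $\dot u(t)=a(t)^{-\alpha}$, hence precisely the reparametrization $\tau=u(t)=\int_0^t a(s)^{-\alpha}\,ds$ of \eqref{inverz}. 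Lemma \ref{hamiltonizacija} then delivers, in the new time $\tau$, the Lagrangian system with $L_c^*(\varphi,\varphi',\tau)=\frac12(mr^2+I)a^2(z(\tau))(\mathcal N(\tau)\varphi')^2-mrg\,a(z(\tau))\sin\varphi$ and $\mathcal N(\tau)=1/z'(\tau)=a(z(\tau))^{-\alpha}$, which is the asserted form.

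The only point that goes beyond a verbatim application of the lemma is the claim that $z$ is defined on all of $\R$ (the lemma itself is local). Here I would use the boundedness hypothesis: since $1\le a(t)<a_{max}$ (the lower bound being automatic from $a(t)=(R(t)+r)/r$) and $0<\alpha<1$, we have $a_{max}^{-\alpha}<a(t)^{-\alpha}\le 1$ for all $t$. Thus $u'=a^{-\alpha}$ is bounded away from $0$ and from $\infty$, so $u$ is a strictly increasing $C^1$ function with $u(t)\to\pm\infty$ as $t\to\pm\infty$; hence $u\colon\R\to\R$ is a diffeomorphism and $z:=u^{-1}$ is globally defined. I expect this uniform-bound step to be the main (in fact the only) real obstacle; the rest is a direct substitution into Lemma \ref{hamiltonizacija}.

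Finally, for the periodicity statement, if $R(t+T)=R(t)$ then $a$, and hence $a^{-\alpha}$, are $T$-periodic, so $u(t+T)-u(t)=\int_t^{t+T}a(s)^{-\alpha}\,ds=\int_0^T a(s)^{-\alpha}\,ds=u(T)$ does not depend on $t$. Therefore the shift $t\mapsto t+T$ corresponds to $\tau\mapsto\tau+u(T)$ in the new time, and under this shift the coefficients $a(z(\tau))$ and $\mathcal N(\tau)=a(z(\tau))^{-\alpha}$ --- and hence $L_c^*$ --- are invariant, which gives periodicity with period \eqref{period}.
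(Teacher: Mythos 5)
Your proposal is correct and follows essentially the same route as the paper's proof: identify $K(t)=(mr^2+I)a^2(t)$, $B(t)=Ia(t)\dot a(t)$ in Lemma \ref{hamiltonizacija}, integrate $\frac{d}{dt}\ln\dot u=-\alpha\frac{d}{dt}\ln a$ with the normalization $\mathcal N_0=a(0)^{-\alpha}$ to obtain \eqref{inverz}, and use the bounds on $a$ to get global invertibility of $u$ and the periodicity claim. No gaps.
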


\begin{proof}
The equation \eqref{2dim} takes the form
\begin{equation*}
\label{1dimRED}
\frac{d\ln(\dot u(t))}{dt}=-\frac{Ia(t)\dot a(t)}{ (mr^2+I)a^2(t)}=-\frac{I}{ (mr^2+I)}\frac{d}{dt}\ln(a(t)).
\end{equation*}

Thus,
\[
\dot u(t)=\N_0\big(\frac{a(t_0)}{a(t)}\big)^\alpha.
\]

We can set $t_0=0$, $\dot u(0)=\N_0=a(0)^{-\alpha}$, $u(0)=\tau_0=0$.
Then, a time reparametrization $t=z(\tau)$ is
given by inverting the integral \eqref{inverz}.
Since $0<a(t)<a_{max}$ and $0<\alpha<1$, we get
\begin{align*}
&\tau  > a_{max}^{-\alpha} t, \qquad t>0,\\
&\tau  < a_{max}^{-\alpha} t, \qquad t <0,
\end{align*}
and $\tau=u(t)$ is the invertible mapping defined for all $t,\tau\in\R$.

If $a(t)$ is periodic with a period $T$, it is clear that the period of the Lagrangian $L^*_c$ is given by \eqref{period}.
\end{proof}

\begin{rem}
Alternatively, by introducing $p={\partial L^*_c}/{\partial \varphi'}$ and the Hamiltonian function
\[
H^*(\varphi,p,\tau)=\frac{1}{2(mr^2+I)\big(a(z(\tau))\mathcal N(\tau)\big)^2} p^2+mr g a(z(\tau))\sin\varphi,
\]
we can write the equations in Hamiltonian form on $T^* S^1\{\varphi,p\}$,
\[
\varphi'=\frac{\,\,\partial H^*}{\partial p}=\frac{1}{(mr^2+I)\big(a(z(\tau))\mathcal N(\tau)\big)^2}p, \quad p'=-\frac{\,\,\partial H^*}{\partial\varphi}=-mr g a(z(\tau))\cos\varphi.
\]
\end{rem}

\subsection*{Acknowledgments}
The author is grateful to Vladimir Dragovi\'c and Borislav Gaji\'c for useful discussions.
The research is based on the joint long year experience in teaching classical mechanics and symplectic geometry at the Mathematical institute SANU.
The author is also grateful to Tijana \v Sukilovi\'c for several corrections.
The research was supported by the Project no. 7744592, MEGIC "Integrability
and Extremal Problems in Mechanics, Geometry and Combinatorics" of the Science Fund
of Serbia.


\begin{thebibliography}{84}


\bibitem{Ar} 
V.\,I. Arnold, \emph{Mathematical methods of classical mechanics}, Springer-Verlag, 1989.

\bibitem{AKN} V. I. Arnold, V. V. Kozlov, A. Neishtadt, \emph{Mathematical Aspects of Classical and Celestial
Mechanics}.  3rd ed. Springer, 2006.

\bibitem{AK} V. I. Arnold, B. A. Khesin, \emph{Topological Methods in Hydrodynamics},  2nd ed,  Applied Mathematical Sciences, Springer, 2021.


\bibitem{Baksa2012} A. Bak\v sa, \emph{Ehresmann connection in the geometry of nonholonomic systems},
Publ. Inst. Math., Nouv. Sér. \textbf{91}(105) (2012), 19--24.


\bibitem{BSL}
Belyakov A O, Seyranian A P, Luongo A (2009) \emph{Dynamics of the Pendulum with
Periodically Varying Length}. Physica D 238: 1589-1597


\bibitem{Bilimovic1913b} A. Bilimovitch, \emph{Sur les syst\'emes conservatifs non holonomes avec des liaisons dependentes du temps},
C.R. \textbf{156} (1913). 1216--1218
\url{https://www.biodiversitylibrary.org/item/31588#page/1224/mode/1up}


\bibitem{Bilimovic1915} A. Bilimovic, \emph{A nonholonomic pendulum}, Mat. Sb. \textbf{29} (2) (1915), 234--240. (in Russian)
\url{https://www.mathnet.ru/links/c62b011546a0d939d8c58f9efd530c7a/sm6518.pdf}

\bibitem{Bloch} A.\,M. Bloch, \emph{Nonholonomic Mechanics and Control}, Interdisciplinary Applied Mathematics, Springer 2015.

\bibitem{BKMM} A.\,M. Bloch, P.\,S. Krishnaprasad, J.\,E. Marsden, R.\,M. Murray, \emph{Nonholonomic mechanical systems with symmetry},
Arch. Ration. Mech. Anal. \textbf{136} (1996), 21--99,

\bibitem{BBM2013} A.\,V. Bolsinov, A.\,V. Borisov, I.\,S. Mamaev,
\emph{Geometrisation of Chaplygins reducing multiplier theorem}, Nonlinearity \textbf{28} (2015), 2307--2318.


\bibitem{BMB} 	A. V. Borisov, I. S. Mamaev, I. A. Bizyaev, \emph{The Jacobi integral in nonholonomic mechanics}, Regular and Chaotic Dynamics, \textbf{20} (2015), 383--400.

\bibitem{BT2020} A.\,V. Borisov, A. Tsiganov,
\emph{On rheonomic nonholonomic deformations of the Euler equations proposed by Bilimovich}, Theor. Appl. Mech. \textbf{47}(2) (2020) 155--168,  	arXiv:2002.07670 [nlin.SI].


\bibitem{CCL2002} F. Cantrijn, J. Cortes, M. de Leon, D. Martin de Diego,
\emph{On the geometry of generalized Chaplygin systems}, Math. Proc. Camb. Philos. Soc. \textbf{132}(2) (2002), 323--351; arXiv:math.DS/0008141.

\bibitem{Chaplygin1912} S.\,A. Chaplygin,
\emph{On the theory of the motion of nonholonomic systems. Theorem on the reducing multiplier},
Mat. Sb. \textbf{28}(2) (1912), 303--314. (in Russian)
\url{https://www.mathnet.ru/links/b52740bd99e2441ad2738dee7c517008/sm6533.pdf}

\bibitem{Dem1924} V. Demchenko,
\emph{Rolling without sliding of a gyroscopic ball over a sphere, doctoral dissertation},
University of Belgrade, 1924, pp. 94, printed ``Makarije'' A.D. Beograd-Zemun. (in Serbian)
\url{http://elibrary.matf.bg.ac.rs/handle/123456789/118}


\bibitem{DGJ} V. Dragovi\' c, B. Gaji\' c, B. Jovanovi\' c,
\emph{Demchenko's nonholonomic case of a gyroscopic ball rolling without sliding over a sphere after his 1923 Belgrade doctoral thesis},
Theor. Appl. Mech. \textbf{47}(2) (2020), 257--287,   	arXiv:2011.03866 [math.DS].

\bibitem{DGJ2} V. Dragovi\' c, B. Gaji\' c, B. Jovanovi\' c,
\emph{Gyroscopic Chaplygin Systems and Integrable Magnetic Flows on Spheres},
J. Nonlinear Sci., \textbf{33} (2023), 43 (51pp).

\bibitem{EK2019} K. Ehlers, J. Koiller, \emph{Cartan meets Chaplygin}, Theor. Appl. Mech. \textbf{46}(1) (2019), 15--46.

\bibitem{FS} F. Fasso,  N. Sansonetto, \emph{Conservation of 'moving' energy in nonholonomic systems with affine constraints and integrability of spheres on rotating surfaces}, J. Nonlinear Sci., \textbf{26} (2016), 519--544, arXiv: 1503.06661.

\bibitem{FNS} F. Fasso, L.\,C. Garcia-Naranjo,  N. Sansonetto, \emph{Moving energies as first integrals of nonholonomic systems with affine constraints},
Nonlinearity \textbf{31}(2018) 755--782, arXiv:1611.08626


\bibitem{GajJov2019b} B. Gaji\' c, B. Jovanovi\' c,
\emph{Two integrable cases of a ball rolling over a sphere in $\mathbb R^n$},
Rus. J. Nonlin. Dyn., \textbf{15}(4) (2019), 457--475

\bibitem{Naranjo2019b} L.\,C. Garcia-Naranjo,
\emph{Hamiltonisation, measure preservation and first integrals of the multi-dimensional rubber Routh sphere},
Theor. Appl. Mech. \textbf{46}(1) (2019), 65--88,    	arXiv:1901.11092 [nlin.SI].

\bibitem{Naranjo2020} L.\,C. Garcia-Naranjo, J.\,C. Marrero,
\emph{The geometry of nonholonomic Chaplygin systems revisited}, Nonlinearity \textbf{33}(3) (2020), 1297--1341,       	arXiv:1812.01422 [math-ph].

\bibitem{Jo3} B. Jovanovi\'c, \emph{Symmetries of line bundles and Noether theorem for time-dependent nonholonomic systems},
 Journal of Geometric Mechanics,  \textbf{10} (2018) 173--187.

\bibitem{Jo} B. Jovanovi\'c, \emph{Affine Geometry and Relativity}, Foundations of Physics \textbf{53} (2023) 60 (29pp), arXiv:2305.13496

\bibitem{Jo2} B. Jovanovi\'c, \emph{D'Alamebrt and Hamiltonian principles and classical relativity in Lagrangian mechanics},
Vestnik MGU. Seriya 1. Matematika. Mekhanika (2024), arXiv:2403.01114

\bibitem{LOS}
M. Leok, T. Ohsawa, D. Sosa,
\emph{Hamilton–Jacobi theory for degenerate Lagrangian
systems with holonomic and nonholonomic constraints},
Journal of Mathematical Physics \textbf{53} (2012) 072905.


\bibitem{LMM}
M. de Leon, J. Marin-Solano, J. C. Marrero, \emph{The constraint algorithm in the jet formalism},
Differential Geometry and its Applications, \textbf{6} (1996) no. 3, 275--300.

\bibitem{Koi1992} J. Koiller, \emph{Reduction of some classical non-holonomic systems with symmetry}, Arch. Rational Mech. Anal. \textbf{118} (1992), 113--148.

\bibitem{NS} M. C. Nucci,  N. Sansonetto, \emph{Moving energies hide within Noether’s first theorem}, J. Phys. A: Math. Theor. \textbf{56} (2023) 165202 (9pp).

\bibitem{MP} E. Massa, E. Pagani, \emph{A new look at classical mechanics of constrained systems},
Annales de l'I. H. Poencar\'e, section A, \textbf{66} (1997) no 1,  1--36.

\bibitem{Mus1} Dj. Mu\v sicki, \emph{Extended Lagrangian formalism and the corresponding energy relations}, Eur. J.
Mech., A \textbf{23} (2004), 975--991.

\bibitem{Mus2} Dj. Mu\v sicki, \emph{Extended Lagrangian formalism for
rheonomic systems with variable mass}, Theoretical and Applied Mechanics,
\textbf{44} (2017) no. 1, 115--132.

\bibitem{P1} H. Poincar\'e,  \emph{La Science et l'Hypothese}, Paris, 1902  (French). English translation:
\emph{Science and Hypothesis}, The Walter Scott Publishing co., ltd. New York: 3 East 14th Street, 1905.

\bibitem{Vor1902} P. Voronec, \emph{On equations of motion of nonholonomic systems},
Mat. Sb. \textbf{22}(4) (1901), 659\nobreakdash--686. (in Russian).
\url{https://www.mathnet.ru/links/eb3233c60326d25a9864bdb7774efd28/sm7978.pdf}


\end{thebibliography}
\end{document}